\newcommand{\qrm}{\texttt{qr\_mumps}\xspace}
\newcommand{\mumps}{\texttt{MUMPS}\xspace}
\newcommand{\superlu}{\texttt{SuperLU}\xspace}
\newcommand{\pastix}{\texttt{PaStiX}\xspace}
\newcommand{\inte}[4][]{\int_{#2}^{#1} \! #3 \ \mathrm #4}
\def\mystrut(#1){\vrule height #1 depth 0pt width 0pt}
	\newcommand{\newparskip}{\bigskip}
        \newtheorem{defi}[theorem]{Definition}
\newcommand{\LG}[1]{\mathcal{L}_{#1}}
\newcommand{\para}[2]{#1 \mathop{\parallel} #2}
\newcommand{\seri}[2]{#1 \mathop{;} #2}
\newcommand{\s}{\mathcal S\xspace}
\newcommand{\spm}{\ensuremath{\mathcal{S}_{\mathrm{PM}}}\xspace}
\newcommand{\mpm}{\ensuremath{M^{\mathrm{PM}}}\xspace}
\newcommand{\sopt}{\ensuremath{\mathcal{S}_{\mathrm{OPT}}}\xspace}
\newcommand{\uopt}{\ensuremath{u_{\mathrm{OPT}}}\xspace}
\newcommand{\mopt}{\ensuremath{{M}_{\mathrm{OPT}}}\xspace}
\newcommand{\A}{\ensuremath{\mathcal A}\xspace}
\newcommand{\B}{\ensuremath{\mathcal B}\xspace}
\newcommand{\T}{\Delta}
\newcommand{\tl}{\tau}
\newcommand{\R}{\ensuremath{\mathcal{R}}\xspace}
\newcommand{\rat}{r}
\newcommand{\fctofproc}{{processor profile}\xspace}
\newcommand{\frtrd}{\ensuremath{\left(\frac{4}{3}\right)^\alpha}}
\newcommand{\mideal}{\ensuremath{{M}_{\mathrm{ideal}}}\xspace}
\newcommand{\saideal}{\ensuremath{{\Sigma}_{\mathrm{ideal}}}\xspace}
\newcommand{\sa}{\ensuremath{\sum_{i\in A} x_i}}
\newcommand{\ao}{\ensuremath{A_{\mathrm{O}}}}
\newcommand{\sab}{\ensuremath{\sum_{i\in \bar A} x_i}}
\newcommand{\sao}{\ensuremath{\sum_{i\in \ao} x_i}}
\newcommand{\saob}{\ensuremath{\sum_{i\in \bar \ao} x_i}}
\newcommand{\saalp}{\ensuremath{\left(\sum_{i\in A} x_i\right)^\alpha}}
\newcommand{\sabalp}{\ensuremath{\left(\sum_{i\in \bar A} X_i\right)^\alpha}}
\newcommand{\saopt}{\ensuremath{\mathcal{\Sigma}_{\mathrm{OPT}}}\xspace}
\newcommand{\puisa}[1]{\left(#1\right)^\alpha}
\newcommand{\eg}{\varepsilon_\gamma}
\newcommand{\ek}{\varepsilon_{\kappa}}
\newcommand{\el}{\varepsilon_\lambda}
\DeclareMathOperator*{\argmin}{arg\,min}
\newcommand{\midset}{\mathrel{}\middle|\mathrel{}}
\newcommand{\pqres}{ {\sc $(p,q)$-Scheduling Restricted}\xspace}
\newcommand{\pqsched}{ {\sc $(p,q)$-Scheduling}\xspace}
\newif\iflong
\def\blfootnote{\gdef\@thefnmark{}\@footnotetext}
\begin{document}
%
% paper title
% can use linebreaks \\ within to get better formatting as desired
\title{Scheduling Trees of Malleable Tasks\\ for Sparse Linear
  Algebra}

\titlerunning{Scheduling Trees of Malleable Tasks}

\author{Abdou Guermouche\inst{1} \and Loris Marchal\inst{2} \and Bertrand Simon\inst{2} \and Fr\'ed\'eric Vivien\inst{2}}
\institute{University of Bordeaux and INRIA,\\
     200 rue de la vieille Tour,  Talence, France\\
     \url{abdou.guermouche@labri.fr}
\and
CNRS, INRIA and University of Lyon,\\
     LIP, ENS Lyon, 46 allée d'Italie, Lyon, France\\
     \url{{loris.marchal,bertrand.simon,frederic.vivien}@ens-lyon.fr}}

% make the title area
\maketitle

\begin{abstract}
  Scientific\blfootnote{This work was supported by the ANR SOLHAR project
  funded by the French National Research Agency.} workloads are often described by directed acyclic task
  graphs. This is in particular the case for multifrontal
  factorization of sparse matrices ---the focus of this paper--- whose
  task graph is structured as a tree of parallel tasks. Prasanna and
  Musicus~\cite{prasmus,prasmus2} advocated using the concept of
  \emph{malleable} tasks to model parallel tasks involved in matrix
  computations.  In this powerful model each task is processed on a
  time-varying number of processors.  Following Prasanna and Musicus,
  we consider malleable tasks whose speedup is $p^\alpha$, where $p$
  is the fractional share of processors on which a task executes, and
  $\alpha$ ($0 < \alpha \leq 1$) is a task-independent parameter.
  Firstly, we use actual experiments on multicore platforms to
  motivate the relevance of this model for our application. Then, we
  study the optimal time-minimizing allocation proposed by Prasanna
  and Musicus using optimal control theory. We greatly simplify their
  proofs by resorting only to pure scheduling arguments. Building on
  the insight gained thanks to these new proofs, we extend the study
  to distributed (homogeneous or heterogeneous) multicore
  platforms. We prove the NP-completeness of the corresponding
  scheduling problem, and we then propose some approximation
  algorithms.
\end{abstract}

% \begin{IEEEkeywords}
%   Scheduling; task graphs; malleable tasks; sparse linear algebra.
% \end{IEEEkeywords}

\section{Introduction}
\label{sec:intro}

Parallel workloads are often modeled as directed acyclic task graphs,
or DAGs, where nodes represent tasks and edges represent dependencies
between tasks. Task graphs arise from many scientific domains, such as
image processing, genomics, and geophysical simulations. In this
paper, we focus on task graphs coming from sparse linear algebra, and
especially from the factorization of sparse matrices using the
multifrontal method. Liu~\cite{liu:90} explains that the computational
dependencies and requirements in Cholesky and LU factorization of
sparse matrices using the multifrontal method can be modeled as a task
tree, called the \emph{assembly tree}. We therefore focus on
dependencies that can be modeled as a tree.

In the abundant existing literature, several variants of the task
graph scheduling problem are addressed, depending on the ability to
process a task in parallel: tasks are either \emph{sequential} (not
amenable to parallel processing), \emph{rigid} (requesting a given
number of processors), \emph{moldable} (able to cope with any fixed
number of processors) or even \emph{malleable} (processed on a variable
number of processors) in the terminology of~Drozdowski~\cite[chapter
25]{handbook}. When considering moldable and malleable tasks, one has
to define how the processing time of a task depends on the number of
allocated  processors.  Under some general
assumptions, Jansen and Zhang~\cite{jansen05} derive a
3.29~approximation algorithm for arbitrary precedence constraints,
which is improved in a 2.62~approximation in the particular case of a
series-parallel precedence graph by Lepere et
al.~\cite{lepere02}. However, although polynomial, these algorithms
relies on complex optimization techniques, which makes them difficult
to implement in a practical setting.

In this study, we consider a special case of malleable tasks, where
the speedup function of each task is $p^\alpha$, where $p$ is the
number of processors allocated to the task, and $0<\alpha\leq 1$ is a
global parameter.  In particular, when the share of processors $p_i$
allocated to a task $T_i$ is constant, its processing time is given by
$L_i/ p_i^\alpha$, where $L_i$ is the sequential duration of
$T_i$. The case $\alpha=1$ represents the unrealistic case of a
perfect linear speed-up, and we rather concentrate on the case
$\alpha<1$ which takes into consideration the cost of the
parallelization. In particular $\alpha<1$ accounts for the cost of
intra-task communications, without having to decompose the tasks in
smaller granularity sub-tasks with explicit communications, which
would make the scheduling problem intractable. This model has been
advocated by Prasanna and Musicus~\cite{prasmus2} for matrix
operations, and we present some new motivation for this model in our
context. As in~\cite{prasmus2}, we also assume that it is possible to
allocate non-integer shares of processors to tasks. This amounts to
assume that processors can share their processing time among
tasks. When task $A$ is allocated 2.6 processors and task $B$ 3.4
processors, one processor dedicates 60\% of its time to $A$ and 40\%
to $B$. Note that this is a realistic assumption, for example, when
using modern task-based runtime systems such as StarPU~\cite{starpu},
KAAPI~\cite{kaapi}, or PaRSEC~\cite{parsec}. This allows to simplify
the scheduling problem and to derive optimal allocation algorithms.

Our objective is to minimize the total processing time of a tree of
malleable tasks. Initially, we consider a homogeneous platform
composed of $p$ identical processors. To achieve our goal, we take
advantage of two sources of parallelism: the \emph{tree parallelism}
which allows tasks independent from each others (such as siblings) to
be processed concurrently, and the \emph{task parallelism} which
allows a task to be processed on several processors. A solution to
this problem describes both in which order tasks are processed and
which share of computing resources is allocated to each task.

In~\cite{prasmus,prasmus2}, the same problem has been addressed by
Prasanna and Musicus for series-parallel graphs (or SP-graphs). Such
graphs are built recursively as series or parallel composition of two
smaller SP-graphs. Trees can be seen as a special-case of
series-parallel graphs, and thus, the optimal algorithm proposed
in~\cite{prasmus,prasmus2} is also valid on trees. They use optimal
control theory to derive general theorems for any strictly increasing
speedup function. For the particular case of the speedup function
$p^\alpha$, Prasanna and Musicus prove some properties of the unique
optimal schedule which allow to compute it efficiently. Their results
are powerful (a simple optimal solution is proposed), but to obtain
these results they had to transform the problem in a shape which is
amenable to optimal control theory. Thus, their proofs do not provide
any intuition on the underlying scheduling problem, yet it seems
tractable using classic scheduling arguments.

In this paper, our contributions are the following:
\begin{compactitem}
\item In Sect.~\ref{sec:motiv}, we show that the model of malleable
  tasks using the $p^\alpha$ speed-up function is justified in the
  context of sparse matrix factorization.
\item In Sect.~\ref{sec:shared}, we propose a new and simpler proof
  for the results of ~\cite{prasmus,prasmus2} on series-parallel
  graphs, using pure scheduling arguments. 
\item In Sect.~\ref{sec:dist}, we extend the previous study on
  distributed memory machines, where tasks cannot be distributed
  across several distributed nodes. We provide NP-completeness
  results and approximation algorithms.
\end{compactitem}

%\section{Application and platform model}

\section{Validation of the Malleable Task Model}
\label{sec:motiv}

In this section, we evaluate the model proposed by
Prasanna and Musicus in~\cite{prasmus,prasmus2} for our target
application. This model states that the instantaneous speedup of a task processed on
$p$ processors is $p^\alpha$. Thus, the processing time of a task
$T_i$ of size $L_i$ which is allocated a share of processors $p_i(t)$
at time $t$ is equal to the smallest value $C_i$ such that
$
{ \int_{0}^{C_i} \left(p_i(t)\right)^{\alpha} dt} \ \geq\  L_i,
$
where $\alpha$ is a task-independent constant. When the share
of processors $p_i$ is constant, $C_i = L_i/p_i^\alpha$. Our goal is
(i) to find whether this formula well describes the evolution of the
task processing time for various shares of processors and (ii) to
check that different tasks of the same application have the same
$\alpha$ parameter.  We target a modern multicore platform composed of
a set of nodes each including several multicore processors. For the
purpose of this study we restrict ourselves to the single node case
for which the communication cost will be less dominant. In this
context, $p_i(t)$ denotes the number of \emph{cores} dedicated to task
$T_i$ at time $t$.

We consider applications having a tree-shaped task
graph constituted of parallel tasks. This kind of
execution model can be met in sparse direct solvers where the matrix
is first factorized before the actual solution is computed. For
instance, either the multifrontal method~\cite{dure:83} as implemented
in \mumps~\cite{mumps11} or \qrm~\cite{buttari10}, or the supernodal
approach as implemented in \superlu~\cite{superlu} or
in \pastix~\cite{pastix}, are based on tree-shaped task graphs (namely the
assembly tree~\cite{aglp:87}).  Each task in this tree is a partial
factorization of a dense sub-matrix or of a sparse panel. In order to
reach good performance, these factorizations are performed using tiled
linear algebra routines (BLAS): the sub-matrix is decomposed into 2D
tiles (or blocks), and optimized BLAS kernels are used to perform the
necessary operations on each tile. Thus, each task can be seen as a
task graph of smaller granularity sub-tasks.
%, which we call \emph{kernels} to avoid confusion. 
% See Figure~\ref{fig.ex-task-dags} for an illustration.

% \begin{figure}[htbp]
%   \centering
%   \subfigure[Tiled dense sub-matrix to be partially decomposed.]{%
%     \label{fig.ex-task-dags.matrix}%
%     \hspace{2cm}%
%     \includegraphics[scale=0.3]{partial-cholesky-5-2-matrix.fig}%
%     \hspace{2cm}%
%   }
  
%   \subfigure[Corresponding kernel graph.]{%
%     \label{fig.ex-task-dags.kernels}%
%     \includegraphics[scale=0.17]{partial-cholesky-5-2.fig}%
%   }
%   \caption{Example of the decomposition of a task of the DAG of a
%     Cholesky decomposition into smaller kernels.}
%   \label{fig.ex-task-dags}
% \end{figure}

As computing platforms evolve quickly and become more complex (e.g.,
because of the increasing use of accelerators such as GPUs or Xeon
Phis), it becomes interesting to rely on an optimized dynamic runtime
system to allocate and schedule tasks on computing resources. These
runtime systems (such as StarPU~\cite{starpu}, KAAPI~\cite{kaapi}, or
PaRSEC~\cite{parsec}) are able to process a task on a prescribed
subset of the computing cores that may evolve over time. This
motivates the use of the malleable task model, where the share of
processors allocated to a task vary with time. This approach has been
recently used and evaluated~\cite{hugo14} in the context of the \qrm
solver using the StarPU runtime system.

In order to assess whether tasks used within sparse direct
solvers fit the model introduced by Prasanna and Musicus
in~\cite{prasmus2} we conducted an experimental study on several dense
linear algebra tasks. We used a test platform composed of 4 Intel
E7-4870 processors having 10 cores each clocked at 2.40~GHz and having
30~MB of L3 cache for a total of 40 cores. The platform is equipped
with 1~TB of memory with uniform access. We considered dense
operations which are representative of what can be met in sparse
linear algebra computations, namely the standard frontal matrix
factorization kernel used in the \qrm solver. We used either
block-columns of size 32 (1D partitioning) or square blocks of size
256 (2D partitioning). All experiments were made using the StarPU
runtime.

% Figure~\ref{fig.qr4096} presents the timings obtained when computing
% the QR decomposition of a $M\times N$ matrix, with $M=4096$ and
% several values of $N$, for increasing number of processors. The
% logarithmic scales show that the $p^\alpha$ speedup function models
% well the timings, except for small matrices when $p$ is large. In this
% case, there is not enough parallelism in the task to exploit all
% available cores. We have performed a linear regression on the portion
% where $p\leq 10$ to compute the value of $\alpha$ for different task sizes. We
% performed similar experiments with a QR decomposition with $M=1024$,
% and for a Cholesky factorization. The obtained values of $\alpha$ are
% gathered in Figure~\ref{tab.alpha-dense}. All these
% values are very close to one, which means that the parallelization is
% almost perfect.

\begin{figure}[bt]
  \centering
  % \subfigure[Timings (points) and model (lines) for QR with $M=4096$]{\label{fig.qr4096}%
%     \includegraphics[width=0.45\linewidth]{sgeqrf_4096_model_log_log.fig}%
%   }
%   \hspace{1cm}
%   \raisebox{35pt}
%   {
%   \subfigure[\mystrut(1.2em)Values of $\alpha$ measured for QR and Cholesky]{\label{tab.alpha-dense}%
%     \scalebox{0.8}{\begin{tabular}[c]{|c|c|c|c|}
%       \hline
%       N & QR  & QR  &  Cholesky\\
%       & $M=1024$ &$M=4096$  &\\
%       \hline
%       5000 & 0.95 & 0.988 &  0.94 \\
%       \hline
%       10000 & 0.98 & 0.997 &  0.98 \\
%       \hline
%       15000 & 0.99 & 0.998 &  0.99 \\
%       \hline
%       20000 & 0.99 & 0.999 &  0.99 \\
%       \hline
%       25000 & 0.99 & 0.999 &  0.99 \\
%       \hline
%       30000 & 0.99 & 0.999 &  1.00 \\
%       \hline
%       35000 & 1.00 & 0.999 &  0.98 \\
%       \hline
%       40000 & 1.00 & 0.999 & 0.98 \\
%       \hline
%     \end{tabular}
%   }
%   }}\\
% 
\subfigure[\mystrut(1.2em) Timings and model (lines) with 1D partitioning]{\label{fig.qrm1D}%
  \includegraphics[width=0.6\linewidth]{qrm1d_model_log_log.fig}%
}
\hspace{0.5cm}
\raisebox{65pt}{%
\subfigure[\mystrut(1.2em)Values of $\alpha$]{\label{fig.qrmalpha}%
  \scalebox{0.9}{%
    \begin{tabular}[c]{|c|c|c|}
      \hline
      matrix & ~~1D~~ & ~~2D~~ \\
      \hline
      5000x1000~~  & 0.78 & 0.93\\
      \hline
      10000x2500~~ & 0.88 & 0.95 \\
      \hline
      20000x5000~~ & 0.89 & 0.94 \\
      \hline
    \end{tabular}}
}
}
\caption{Timings and $\alpha$ values for \qrm frontal matrix factorization kernel}
  \label{fig.qrm}
\end{figure}

Figure~\ref{fig.qrm1D} presents the timings obtained when processing
the \qrm frontal matrix factorization kernel on a varying number of
processors. The logarithmic scales show that the $p^\alpha$ speedup
function models well the timings, except for small matrices when $p$
is large. In those cases, there is not enough parallelism in tasks to
exploit all available cores. We performed linear regressions on
the portions where $p\leq 10$ to compute $\alpha$ for
different task sizes (Fig.~\ref{fig.qrmalpha}). We performed
the same test for 2D partitioning and computed the corresponding
$\alpha$ values (using $ p\leq 20$). We notice that the value of
$\alpha$ does not vary significantly with the matrix size, which
validates our model. The only notable exception is for the smallest
matrix (5000x1000) with 1D partitioning: it is hard to efficiently use
many cores for such small matrices. In all cases, when the number of
processors is larger than a threshold the performance deteriorates and
stalls. Our speedup model is only valid below this threshold, 
which threshold increases with the matrix size. This is not a
problem as the allocation schemes developed in the next sections
allocate large numbers of processors to large tasks at the top of the
tree and smaller numbers of processors for smaller tasks. In other
words, we produce allocations that always respect the validity
thresholds of the model.
 Finally, note that the value of $\alpha$ depends on the
 parameters of the problem (type of factorization, partitioning, block
 size, etc.). It has to be determined for each kernel and each set of
 blocking parameters. %before the execution %when
% considering a new kernel or new blocking parameters. The values of
% $\alpha$ obtained here are quite high thanks to the good memory
% performance of the considered computing platform. On other platforms,
% smaller values of $\alpha$ can be expected.

\section{Model and Notations}
\label{sec:model}

We assume that the number of available computing resources may vary
with time: $p(t)$ gives the (possibly rational) total number of processors
available at time $t$, also called the \fctofproc. For the sake of
simplicity, we consider that $p(t)$ is a step function. Although our
study is motivated by an application running on a single multicore
node (as outlined in the previous section), we use the term
\emph{processor} instead of \emph{computing core} in the following
sections for readability and consistency with the scheduling
literature.

We consider an in-tree $G$ of $n$ malleable tasks $T_1, \ldots, T_n$.
$L_i$ denotes the length, that is the sequential processing time, of
task $T_i$.  As motivated in the previous section, we assume that the
speedup function for a task allocated $p$ processors is $p^\alpha$,
where $0 < \alpha \leq 1$ is a fixed parameter.
% The processing time of task $T_i$ which is allocated $p_i(t)$
% processors at time $t$ is thus $\frac{L_i}{\int p_i(t) dt}$.
A schedule $\s$ is a set of nonnegative piecewise continuous functions
$\big\{ p_i(t)\ \big|\ i\in I\big\}$ representing the time-varying
share of processors allocated to each task. During a time interval
$\T$, the task $T_i$ performs an amount of work equal to $ \inte{\T\
}{p_i(t)^\alpha}{dt}$. Then, $T_i$ is completed when the total work
performed is equal to its length $L_i$. The completion time of task
$T_i$ is thus the smallest value $C_i$ such that $
\int_0^{C_i}{p_i(t)^\alpha}{dt} \geq L_i$.
We define $w_i(t)$ as the ratio of the
work of the task $T_i$ that is done during the time interval $[0,t]$:
$w_i(t) = \inte[t]{0}{p_i(x)^\alpha}{dx} \big/ L_i $.  A schedule is a
valid solution if and only if:
\begin{compactitem}
\item it does not use more processors than available: $\forall t,  \sum_{i\in I} p_i(t) \leq p(t)$;
\item it completes all the tasks: $\exists\tl,\ \forall i \in I, \ \ w_i(\tl)=1$;
\item and it respects precedence constraints:
  $\forall i\in I, \forall t$, if $p_i(t)>0$ then, $\forall j \in I$,
  if $j$ is a child of $i$, $w_j(t)=1$.
\end{compactitem}
The makespan $\tl$ of a schedule is computed as $\min \{t \ | \
\forall i\ w_i(t) = 1\}$. Our objective is to construct a valid
schedule with optimal, i.e., minimal, makespan.

% The ratio of processors given to a task is $p_i(t)/p(t)$. This quantity is relevant as it will be proved constant under certain conditions.

Note that because of the speedup function $p^\alpha$, the computations
in the following sections will make a heavy use of the functions
$f:x\mapsto x^\alpha$ and $g:x\mapsto x^{(1/\alpha)}$. We assume that
we have at our disposal a polynomial time algorithm to compute both
$f$ and $g$. We are aware that this assumption is very likely to be
wrong, as soon as $\alpha<1$, since $f$ and $g$ produce irrational
numbers. However, without these functions, it is not even possible to
compute the makespan of a schedule in polynomial time and, hence, the
problem is not in NP. Furthermore, this allows us to avoid the
complexity due to number computations, and to concentrate on the most
interesting combinatorial complexity, when proving NP-completeness
results and providing approximation algorithms. In practice, any
implementation of $f$ and $g$ with a reasonably good accuracy will be
sufficient to perform all computations including the computation
of makespans.

In the next section, following Prasanna and Musicus, we will not
consider trees but more general graphs: \emph{series-parallel graphs}
(or SP graphs). An SP graph is recursively defined as a single task,
the series composition of two SP graphs, or the parallel composition
of two SP graphs. % The two subgraphs in a parallel composition are
% called branches. Series compositions are ordered so that it is clear
% which subgraph should be executed first. The resulting DAG expresses
% the precedence constraints.
% Two nodes having the same set of
% predecessors, i.e., the same in-neighbors are called \emph{siblings}.
% %
A tree can easily be transformed into an SP graph by joining the
leaves according to its structure, % (see Figure \ref{fig:treeSP}),
the resulting graph is then called a \emph{pseudo-tree}.  We will use
$(\para ij)$ to represent the parallel composition of tasks $T_i$ and
$T_j$ and $(\seri ij)$ to represent their series composition. 
% The SP graph of Figure \ref{fig:treeSP} can be represented as
% $\left( \seri{
%   \left(\mystrut(1.1em)\para{\left(\mystrut(1em)\seri{\left(\para{\left(\para
%           45 \right)}{6}\right)\,}{\,2}\right)\,}{\,3}\right)\ }{\
%   1}\right)$.
Thanks to the construction of pseudo-trees, an algorithm which solves
the previous scheduling problem on SP-graphs also gives an optimal
solution for trees.

% \begin{figure}[htbp]
% 	\centering
%         \scalebox{0.7}{\input{fig/figspgraph.tex}}
% \caption{Example of a tree transformed into an SP graph.}
% \label{fig:treeSP}
% \end{figure}

\section{Optimal Solution for Shared-Memory Platforms}

\label{sec:shared}

% In this section, we do not focus on memory consumption but only on
% makespan minimization. Therefore, {\it optimal} should be understood
% here as {\it minimizing the makespan}.

% As explained above, the speedup is fixed in this section, which means that we design algorithms to a specific speedup, and we do not study the scheduling problem for generic speedup functions, as it was done in \cite{prasmus}.
% We will first study the model where $f(p) = p^\alpha$, as motivated in \cite{prasmus}, before designing a more realistic model for the special case where $p<1$.

The purpose of this section is to give a simpler proof of the results
of~\cite{prasmus,prasmus2} using only scheduling arguments. We
consider an SP-graph to be scheduled on a shared-memory platform
(each task can be distributed across the whole platform). We assume
that $\alpha<1$ and prove the uniqueness of 
 the optimal schedule.

Our objective is to prove that any SP graph $G$ is \emph{equivalent}
to a single task $T_G$ of easily computable length: for any \fctofproc
$p(t)$, graphs $G$ and $T_G$ have the same makespan. We prove that the
ratio of processors allocated to any task $T_i$, defined by
$\rat_i(t)= p_i(t)/p(t)$, is constant from the
moment at which $T_i$ is initiated to the moment at which it is
terminated. We also prove that in an optimal schedule, the two
subgraphs of a parallel composition terminate at the same
time and each receives a constant total ratio of processors throughout
its execution.
We then prove that these properties imply that the optimal schedule is unique and obeys
to a {\it flow conservation} property: the shares of processors
allocated to two subgraphs of a series composition are equal. When
considering a tree, this means that the whole schedule is defined by
the ratios of processors allocated to the leaves. Then, all the children of a node $T_i$ terminate at the
same time, and its ratio is the sum of its children
ratios.

We first need to define the length $\LG{G}$ associated to a graph $G$, which
will be proved to be the length of the task $T_G$. Then, we state a few lemmas
before proving the main theorem. We only present here sketches of the proofs,
the detailed versions can be found in \cite{RR-ipdps-2014}.

\begin{defi}
\label{def.eq-task}
We recursively define the length $\LG{G}$ associated to a SP graph $G$:
\begin{inparaitem}
	\item $\LG{T_i} = L_i$ \hfill 
	\item $\LG{\seri{G_1}{G_2}} = \LG{G_1} + \LG{G_2}$ \hfill
	\item $\LG{\para{G_1}{G_2}} = \left(\LG{G_1}^{1/\alpha} + \LG{G_2}^{1/\alpha}\right)^\alpha$
\end{inparaitem}
\end{defi}

\newcommand{\Q}{\mathcal{Q}}
%%%%%%%%%%%%%%%%%%%%%%%%%%%%%%%%%%%%%%%%%%%%%%%%%%%%%%%%%%%
%%%%%%%%%%%%%%%%%%%%%%%%%%%%%%%%%%%%%%%%%%%%%%%%%%%%%%%%%%%
%\subsubsection{General lemmas}

\begin{lemma}
  \label{lem:allproc}
  An allocation minimizing the makespan uses all the processors at any time. 
\end{lemma}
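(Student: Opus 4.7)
The plan is to proceed by contradiction. I assume that $\s$ is an optimal schedule of makespan $\tl$ yet that the set $E = \{t \in [0,\tl) : \sum_{i \in I} p_i(t) < p(t)\}$ has positive Lebesgue measure, and build a feasible schedule of strictly smaller makespan. The key idea is to identify a chain of tasks whose lengths determine $\tl$ and show that the spare processor time in $E$ can be injected into it.

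As a preliminary step I would establish two ``no-waste'' properties of any optimal schedule by short shifting arguments: at every instant $t<\tl$ at least one task must be active (otherwise either a ready task could be started earlier, or every task scheduled after the void could be uniformly shifted earlier); and every task $T_i$ starts exactly at the completion time of its last-completing predecessor (any gap could be removed by shifting). Both arguments contradict the optimality of $\tl$ if any violation holds.

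Next I would exhibit a \emph{critical chain}. Choose a task $T_{k_m}$ with $C_{k_m}=\tl$ and recursively define $T_{k_{j-1}}$ to be a last-completing predecessor of $T_{k_j}$, until reaching a task $T_{k_1}$ with no predecessor, which must therefore start at time $0$. By the no-wait property, the intervals $[s_{k_j},C_{k_j}]$ chain consecutively and cover $[0,\tl]$, so there exists an index $j$ for which $E \cap [s_{k_j},C_{k_j}]$ has positive measure. On this subset the instantaneous slack $\delta(t)=p(t)-\sum_{i\in I} p_i(t)$ is strictly positive. I would modify $\s$ by adding $\delta(t)$ to $p_{k_j}(t)$ on this subset and leaving every other allocation unchanged; this preserves the processor bound and all precedence constraints, since only $T_{k_j}$ can terminate earlier. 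However, $T_{k_j}$ now accumulates strictly more than $L_{k_j}$ within its original execution window, hence terminates at some $C'_{k_j}<C_{k_j}$. Using the processors freed on $[C'_{k_j},C_{k_j}]$ to start $T_{k_{j+1}}$ early (there is no scheduling conflict there since $T_{k_{j+1}}$ was originally scheduled to begin only at $C_{k_j}$), and iterating this ``free-and-advance'' step down the chain, I would obtain a feasible schedule whose makespan is strictly less than $\tl$, contradicting the optimality of $\s$.

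The main technical obstacle is the possibility that several tasks share the completion time $\tl$, in which case there are several distinct critical chains that must all be shortened simultaneously. I would handle this by observing that each such chain still covers $[0,\tl]$ continuously, so at every instant in $E$ at least one task from each chain is active; distributing the positive slack $\delta(t)$ in strictly positive shares to one active task per chain shortens every chain by a positive amount and therefore strictly reduces the overall makespan.
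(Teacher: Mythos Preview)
The paper itself omits the proof of this lemma (it is deferred to the companion report), so there is nothing to compare against directly; I assess your argument on its own. The critical-chain strategy is a natural route, but two steps are not justified as written.

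The substantive gap is the ``no-wait'' claim. You assert that in an optimal schedule every task starts exactly when its last predecessor completes, because ``any gap could be removed by shifting''. But shifting $T_i$'s profile to an earlier interval requires processors to be free there; if all $p(t)$ processors are occupied by other tasks during the gap, the shift is infeasible, and stealing processors from another task may delay that task and hence the makespan. You are trying to establish the most elementary structural property of the optimum, so you cannot invoke a stronger one along the way. Fortunately no-wait is unnecessary: what you actually need is that at every $t\in[0,\tau)$ each critical chain has a \emph{ready} task (all predecessors complete, itself not complete). This follows immediately from your chain construction, since for $C_{k_{j-1}}\le t< C_{k_j}$ all predecessors of $T_{k_j}$ are complete ($T_{k_{j-1}}$ was chosen as the last one) while $T_{k_j}$ is not. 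Hence the slack $\delta(t)$ can always be handed to a ready chain task, even one whose current allocation is zero, and the chain intervals need not cover $[0,\tau]$.

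A second, smaller gap sits in the propagation. When $T_{k_j}$ completes early and you pass its freed processors on $[C'_{k_j},C_{k_j}]$ to $T_{k_{j+1}}$, you assume $T_{k_{j+1}}$ is ready there. But $T_{k_{j+1}}$ may have another predecessor $T_\ell$ with $C_\ell=C_{k_j}$ (your ``last-completing'' choice allows ties), and then $T_{k_{j+1}}$ is not ready before $C_{k_j}$. One way to close this is to note that the freed capacity on $(C'_{k_j},C_{k_j}]$ is itself positive slack (since $p_{k_j}$ must be positive on a positive-measure subset of any left neighbourhood of $C_{k_j}$, or $T_{k_j}$ would have completed earlier), and feed it to whichever predecessor of $T_{k_{j+1}}$ is still running there; iterating over the finitely many blocking predecessors eventually makes $T_{k_{j+1}}$ ready strictly before $C_{k_j}$. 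Your final paragraph on multiple chains is fine in spirit once these two points are repaired, since no modification ever delays a task and there are only finitely many tasks completing at~$\tau$.
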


We call a \emph{clean interval} with regard to a schedule $\s$ an
interval during which no task is completed in $\s$.

\begin{lemma}
  \label{lem:esc}
  When the number of available processors is constant, any optimal schedule
  allocates a constant number of processors per task on any clean
  interval.
  %We have to process $n$ tasks in parallel with a constant number of processors $p$. A schedule that does not allocate a constant number of processors per task on clean intervals is not optimal.
\end{lemma}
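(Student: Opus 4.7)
The plan is to argue by contradiction using an averaging-then-compression exchange. Fix an optimal schedule $\s$ with makespan $\tl$, and a clean interval $[t_1,t_2]$ with respect to $\s$. Let $I$ denote the set of tasks whose predecessors are all complete by $t_1$ and which themselves are not complete before $t_2$; by the definition of a clean interval this set is unchanged throughout $[t_1,t_2]$, and any task outside $I$ satisfies $p_i(t)=0$ there. Lemma~\ref{lem:allproc} gives $\sum_{i\in I}p_i(t)=p$ for a.e.\ $t\in[t_1,t_2]$. Crucially, precedence constraints are \emph{inert} on this interval: one may reallocate the $p$ processors among the tasks of $I$ at any instant without violating them.

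Suppose, for contradiction, that some $p_i$ is not almost-everywhere constant on $[t_1,t_2]$. Let $\bar{p}_i=\frac{1}{t_2-t_1}\int_{t_1}^{t_2}p_i(t)\,dt$ and $W_i=\int_{t_1}^{t_2}p_i(t)^\alpha\,dt$. Jensen's inequality applied to the strictly concave map $x\mapsto x^\alpha$ (valid because $\alpha<1$) yields $(t_2-t_1)\bar{p}_i^{\alpha}\ge W_i$, with strict inequality whenever $p_i$ is non-constant. Since $\sum_i p_i(t)=p$ is constant, a non-constant $p_i$ forces at least one other $p_j$ to be non-constant, so the strict inequality holds for at least two tasks. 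Rewriting as $\bar{p}_i\ge(W_i/(t_2-t_1))^{1/\alpha}$ and summing over $I$, using $\sum_i\bar p_i=p$, gives
\[
p \;>\; (t_2-t_1)^{-1/\alpha}\sum_{i\in I}W_i^{1/\alpha}.
\]

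For the compression step, pick a small $\varepsilon>0$ and define constant shares $q_i=(W_i/(t_2-t_1-\varepsilon))^{1/\alpha}$; then $(t_2-t_1-\varepsilon)q_i^\alpha=W_i$, so running $q_i$ on the shorter interval $[t_1,t_2-\varepsilon]$ performs exactly the same amount of work on each $T_i$ as $\s$ did on $[t_1,t_2]$. Feasibility $\sum_{i\in I}q_i\le p$ follows from the strict inequality displayed above by continuity in $\varepsilon$; any residual processor capacity can be poured into an arbitrary task of $I$, which only accelerates it. Splicing this shortened constant prefix in front of the tail of $\s$ shifted by $-\varepsilon$ yields a valid schedule (precedence is respected because every task occupies the same state at $t_2-\varepsilon$ in the new schedule as it did at $t_2$ in $\s$) with makespan $\tl-\varepsilon<\tl$, contradicting the optimality of $\s$.

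The delicate point is the dovetailing of the three sub-arguments: (i)~Jensen's strict inequality relies on $\alpha<1$ and a non-constant allocation; (ii)~summing the per-task inequalities into a single strict inequality on the processor budget uses Lemma~\ref{lem:allproc} so that the averages $\bar p_i$ sum to exactly $p$; and (iii)~the compressed allocation is feasible only for $\varepsilon$ small enough, which is precisely where the gap supplied by the strict Jensen inequality is spent. The statement fails for $\alpha=1$, because averaging then preserves work exactly and non-constant allocations are compatible with optimality.
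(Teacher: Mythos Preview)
Your proof is correct and follows essentially the same approach as the paper: both argue by contradiction, replace the allocation on the clean interval by its time-average, and invoke the strict concavity of $x\mapsto x^\alpha$ (Jensen) to create slack that allows the interval to be shortened. The only difference is in the final step: the paper averages to $\bar p_i$ and then somewhat informally ``distributes the surplus'' to finish earlier, whereas you convert the strict inequality $p>(t_2-t_1)^{-1/\alpha}\sum_i W_i^{1/\alpha}$ directly into an explicit $\varepsilon$-compression of the interval and a time-shift of the tail; your version is the more explicit of the two.
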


% First complete proof, can be shrunk if necessary

\begin{proof}
  By contradiction, we assume that there exists an optimal schedule
  $\mathcal{P}$ of makespan $M$, a task $T_j$ and a clean interval
  $\T=[t_1,t_2]$ such that $T_j$ is not allocated a constant number of
  processors on $\T$. By definition of clean intervals, no task
  completes during $\T$. $|\T|=t_2-t_1$ denotes the duration of $\T$,
  $I$ the set of tasks that receive a non-empty share of processors
  during $\T$, and $p$ the constant number of available processors.
	
%   By contradiction, we consider an optimal schedule $\mathcal{P}$ of
%   makespan $M$, and we suppose that one task $T_j$ is not allocated a
%   constant number of processors 
% %  \footnote{Formally, an allocation is
% %    constant on $\T$ if it is equal to a given constant except, maybe,
% %    on a subset of $\T$ of null measure.}
%     on a clean interval
%   $\T=[t_1,t_2]$. By definition of clean intervals, no task completes
%   during $\T$. $|\T|=t_2-t_1$ denotes the duration of $\T$, $I$ 
%   the set of tasks that receive a non-empty share of processors during
%   $\T$, and $p$ the constant number of available processors.
	
  % From now on, the index $j$ will refer to this particular task and
  % the index $i$ will refer to any task. Let $p_i(t)$ denote the share
  % of processors allocated to task $T_i$ at time $t\in \T$.
	
  We want to show that there exists a valid schedule with a makespan
  smaller than $M$. To achieve this, we define an intermediate and
  not necessarily valid schedule $\Q$, which
  nevertheless respects the resource constraints (no more than
  $p$ processors are used at time $t$). This schedule is
  equal to $\mathcal{P}$ except on $\T$.
  The constant share of processors allocated to task $T_i$ on $\T$ in
  $\mathcal{Q}$ is defined by $q_i = \frac{1}{|\T|}\int_\T
  p_i(t)dt$.  For all $t$, we have $\sum_{i\in I} p_i(t) = p$ because
  of Lemma~\ref{lem:allproc}. We get $\sum_{i\in I} q_i = p$. So
  $\mathcal{Q}$ respects the resource constraints.
  Let $W_i^\T(\mathcal{P})$ (resp. $W_i^\T(\mathcal{Q})$) denote the
  work done on $T_i$ during $\T$ under schedule $\mathcal{P}$
  (resp. $\mathcal{Q}$).
    We have 
  \begin{align*}	
    W_i^\T(\mathcal{P}) &= \int_\T p_i(t)^\alpha dt = |\T|
    \int_{[0,1]}  p_i(t_1+ t |\T|)^\alpha dt\\
  % \end{align*}
  % \begin{align*}
    W_i^\T(\mathcal{Q}) &= \int_\T  \left(\frac{1}{|\T|}\int_\T p_i(t) dt\right) ^\alpha dx
    =  |\T| \left(\int_{[0,1]} p_i(t_1+t |\T|) dt\right) ^\alpha
  \end{align*}
  As $\alpha<1$, the function $x\mapsto x^\alpha$ is concave and then,
  by Jensen inequality \cite{Hardy}, $W_i^\T(\mathcal{P}) \leq
  W_i^\T(\mathcal{Q})$.  Moreover, as $x\mapsto x^\alpha$ is
  \emph{strictly} concave, this inequality is an equality if and only
  if the function $t\mapsto p_i(t_1+t|\T|)$ is equal to a constant on
  $[0,1[$ except on a subset of $[0,1[$ of null measure \cite{Hardy}.
  Then, by definition, $p_j$ is not constant on $\Delta$, and
  cannot be made constant by modifications on a set of null measure.
  We thus have $W_j^\T(\mathcal{P}) < W_j^\T(\mathcal{Q})$.
  Therefore, $T_j$ is allocated too many processors under
  $\mathcal{Q}$.  It is then possible to distribute this surplus among
  the other tasks during $\T$, so that the work done during $\T$ in
  $\mathcal{P}$ can be terminated earlier. This remark implies that
  there exists a valid schedule with a makespan smaller than $M$;
  hence, the contradiction.\qed
\end{proof}

We recall that  $\rat_i(t)= p_i(t)/p(t)$ is the instantaneous ratio of
processors allocated to a task $T_i$ .

\begin{lemma}
\label{lem:rate}
Let $G$ be the parallel composition of two tasks, $T_1$ and $T_2$. If
$p(t)$ is a step function, in any optimal schedule $r_1(t)$ is
constant and equal to $\pi_1 =
{1}\left/\left({1+\left({L_2}/{L_1}\right)^
      {1/\alpha}}\right)\right. = L_1^{1/\alpha} \left/ \LG{\para
    12}^{1/\alpha} \right.$ up to the completion of $G$.
\end{lemma}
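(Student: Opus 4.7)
The plan is to use Lemma~\ref{lem:esc} to reduce the continuous problem to a finite-dimensional optimization over allocation shares, then apply a strict Jensen inequality to force the optimal allocation to be constant. Since $p(t)$ is a step function and at most two tasks can complete on $[0,M]$, the horizon $[0,M]$ partitions into finitely many maximal clean intervals $I_1,\ldots,I_k$ on each of which $p(t)$ equals some constant $p_j$. By Lemma~\ref{lem:esc} combined with Lemma~\ref{lem:allproc}, an optimal schedule allocates constant shares $x_j p_j$ to $T_1$ and $(1-x_j)p_j$ to $T_2$ on each $I_j$, with $x_j \in [0,1]$. Setting $a_j = |I_j|\, p_j^\alpha$ and $A = \sum_j a_j$, the completion constraints become $\sum_j a_j x_j^\alpha \geq L_1$ and $\sum_j a_j (1-x_j)^\alpha \geq L_2$.

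The core step is to show that all $x_j$'s are equal at the optimum. Suppose by contradiction that they are not, and let $\bar x = \sum_j (a_j/A)\, x_j$. Since $\alpha<1$, the map $u \mapsto u^\alpha$ is \emph{strictly} concave, so strict Jensen's inequality yields $\sum_j a_j x_j^\alpha < A\, \bar x^\alpha$ and $\sum_j a_j (1-x_j)^\alpha < A(1-\bar x)^\alpha$. The schedule defined by the constant allocation $r_1 \equiv \bar x$ on $[0,M]$ is valid (the resource constraint $\bar x p_j + (1-\bar x) p_j = p_j$ holds trivially) and completes strictly more than $L_i$ of work on each $T_i$ within $[0,M]$, hence has a makespan strictly smaller than $M$, contradicting optimality. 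So $r_1(t) \equiv x$ is constant on $[0,M]$.

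With $r_1 \equiv x$ constant, both completion constraints must be tight at the optimum: otherwise the corresponding task would finish strictly before $M$ and a small decrease of $M$ would preserve feasibility. From $A x^\alpha = L_1$ and $A(1-x)^\alpha = L_2$, dividing gives $\bigl(x/(1-x)\bigr)^\alpha = L_1/L_2$, whence $x = 1/\bigl(1+(L_2/L_1)^{1/\alpha}\bigr) = L_1^{1/\alpha}/(L_1^{1/\alpha}+L_2^{1/\alpha})$; by Definition~\ref{def.eq-task}, $\LG{\para{1}{2}}^{1/\alpha} = L_1^{1/\alpha}+L_2^{1/\alpha}$, so the second form matches $L_1^{1/\alpha}/\LG{\para{1}{2}}^{1/\alpha}$.

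The main obstacle is making the averaging step rigorous: one must verify that the resource constraint is preserved on each $I_j$ (immediate, since the total allocation stays $p_j$) and that Jensen's inequality is strict, which hinges on $\alpha<1$ and on the $x_j$'s not being all equal---the very case we are contradicting. Boundary values $x_j \in \{0,1\}$ corresponding to intervals in which one task has already finished cause no trouble, since $L_1,L_2>0$ forces the weighted average $\bar x$ to lie in $(0,1)$, so the averaged schedule is well defined and genuinely improves the makespan.
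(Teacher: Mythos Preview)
Your argument is correct and follows essentially the same route as the paper: invoke Lemma~\ref{lem:esc} to reduce to piecewise-constant ratios, then use the strict concavity of $u\mapsto u^\alpha$ to show that averaging the ratios strictly increases the work done on \emph{both} tasks, contradicting optimality; finally solve $Ax^\alpha=L_1$, $A(1-x)^\alpha=L_2$ for the constant ratio. The only minor variation is that the paper compares two adjacent intervals at a time (after normalizing so that $|A|p^\alpha=|B|q^\alpha$, which lets it use the unweighted midpoint $(r_1^A+r_1^B)/2$), whereas you average over all intervals simultaneously with weights $a_j=|I_j|p_j^\alpha$ and appeal to the weighted Jensen inequality directly; your global version is a slight streamlining of the same idea.
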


\begin{proof}
First, we prove that $r_1(t)$ is constant on any optimal
schedule.

We consider an optimal schedule $\s$, and two consecutive time intervals $A$ and
$B$ such that $p(t)$ is constant and equal to $p$ on $A$ and $q$ on $B$, and
$\s$ does not complete before the end of $B$. Suppose also that $|A|p^\alpha =
|B|q^\alpha$ (shorten one interval otherwise), where $|A|$ and
$|B|$ are the durations of intervals $A$ and $B$. By Lemma \ref{lem:esc}, 
$r_1(t)$ has constant values $r_1^A$ on $A$ and $r_1^B$ on
$B$. Suppose by contradiction that $r_1^A \neq r_1^B$.% and without
                                % loss of generality that $r_1^A <
                                % r_1^B$. % not needed here

We want to prove that $\s$ is not optimal, and so that we can do the
same work as $\s$ does on $A\cup B$ in a smaller makespan. We
set $r_1 = \left.\left({r_1^A+r_1^B}\right) \right/{2}$.  We define
the schedule $\s'$ as equal to $\s$ except on $A \cup B$ where the ratio
allocated to $T_1$ is $r_1$ (see Fig. \ref{fig:rarb}).
\begin{figure}[tb]
\centering
\ifdefined\s
\else
\newcommand{\s}{S}
\fi

\newcommand{\ya}{30pt}
\newcommand{\yb}{50pt}

\newcommand{\yf}{.5*\ya+.5*\yb}

\newcommand{\topy}{70pt}
\newcommand{\topx}{100pt}
\newcommand{\midx}{60pt}

\newcommand{\midlx}{50pt}

\newcommand{\scalerat}{0.8}
\begin{tikzpicture}[scale =\scalerat,
every node/.style={inner sep=0pt, minimum size=0pt, anchor=center, transform shape},
label/.style={}
]

\node (bl) at (0,0) {};
\node (tl) at (0,\topy) {};
\node (br) at (\topx,0) {};
\node (tr) at (\topx,\topy) {};
\node (bm) at (\midx,0) {};
\node (tm) at (\midx,\topy) {};

\node (mra) at (\midx,\ya) {};
\node (mla) at (0,\ya) {};

\node (mlb) at (\midx,\yb) {};
\node (mrb) at (\topx,\yb) {};

\node (mlf) at (0,\yf) {};
\node (mrf) at (\topx,\yf) {};

\node (title) at (\topx/2, \topy+15pt) {$\s$};

\draw (bl)--(tl)--(tr)--(br)--(bl);
\draw (mla)--(mra)
      (mlb)--(mrb)
      (bm)--(tm);

\path (tl)--+(10pt,-10pt) node[label] {$T_2$};
\path (0,\ya)--+(10pt,-10pt) node[label] {$T_1$};
\path (\midx,\yb)--+(10pt,-10pt) node[label] {$T_1$};
\path (tm)--+(10pt,-10pt) node[label] {$T_2$};
\path (mla)--+(-10pt,-0) node[label] {$r_1^A$};
\path (mrb)--+(+10pt,-0) node[label] {$r_1^B$};
\draw[dotted] (tl)--+(-15pt,-0) node[left] {$1$};
\draw[dotted] (bl)--+(-15pt,-0) node[left] {$0$};

\draw [decorate,decoration={brace,amplitude=5pt},yshift=-2pt,xshift=0pt]
(\midx,0)--(0,0) node [midway,yshift=-0.4cm] 
{ $A$};
\draw [decorate,decoration={brace,amplitude=5pt},yshift=-2pt,xshift=0pt]
(\topx,0)--(\midx,0) node [midway,yshift=-0.4cm] 
{$B$};

\end{tikzpicture}
\raisebox{40pt}{%\hspace{10pt}
\huge$\Rightarrow$%\hspace{5pt}
}
%\raisebox{-3pt}
{\begin{tikzpicture}[scale =\scalerat,
every node/.style={inner sep=0pt, minimum size=0pt, anchor=center, transform shape},
label/.style={}
]

\draw (bl)--(tl)--(tr)--(br)--(bl);
\draw (mlf)--(mrf)
      (bm)--(tm);

\path (tl)--+(10pt,-10pt) node[label] {$T_2$};
\path (0,\yf)--+(10pt,-10pt) node[label] {$T_1$};
\path (\midx,\yf)--+(10pt,-10pt) node[label] {$T_1$};
\path (tm)--+(10pt,-10pt) node[label] {$T_2$};
\path (mlf)--+(-10pt,-0) node[label] {$r_1$};
\draw[dotted] (tl)--+(-15pt,-0) node[left,label] {$1$};
\draw[dotted] (bl)--+(-15pt,-0) node[left,label] {$0$};

\path (title) node[label] {$\s'$};

\draw [decorate,decoration={brace,amplitude=5pt},yshift=-2pt,xshift=0pt]
(\midx,0)--(0,0) node [midway,yshift=-0.4cm] 
{ $A$};
\draw [decorate,decoration={brace,amplitude=5pt},yshift=-2pt,xshift=0pt]
(\topx,0)--(\midx,0) node [midway,yshift=-0.4cm] 
{$B$};

\end{tikzpicture}}
\caption{Schedules $\s$ and $\s'$ on $A\cup B$. The abscissae represent the time and the ordinates the ratio of processing power}
\label{fig:rarb}
\end{figure}
\\
The work $W_1$ on task $T_1$ under $\s$ and $W'_1$ under $\s'$ during $A\cup B$ are
equal to:
$$W_1 = |A|p^\alpha \left(r_1^A\right)^\alpha + |B|q^\alpha \left(r_1^B\right)^\alpha
\qquad W'_1 = r_1^\alpha \left(|A|p^\alpha + |B|q^\alpha\right)$$
Then, with the concavity inequality and the fact that $|B|q^\alpha =
|A|p^\alpha$, we can deduce that $W_1'>W_1$ and symmetrically that $W_2'>W_2$.

Therefore, $\s'$ performs strictly more work for each task during $A\cup B$ than
$\s$. Thus, as in Lemma~\ref{lem:esc}, $\s$ is not optimal. So
$r_1(t)$ is constant in optimal schedules.

There remains to prove that in an optimal schedule $\s$, $r_1(t) =
\pi_1$; hence, the optimal schedule is unique.  As $p(t)$ is a step
function, we define the sequences $\left(A_k\right)$ and
$\left(p_k\right)$ such that $A_k$ is the duration of the $k$-th
step of the function $p(t)$ and $p(t)=p_k>0$ on $A_k$. The sum of the
durations of the $A_k$'s is the makespan of $\s$.
Then, if we note $V = \sum_k |A_k| p_k^\alpha$ and $r_1$ the value of $r_1(t)$,
we have:
$$
\begin{array}{c}\displaystyle
L_1 = \sum_k |A_k| r_1^\alpha p_k^\alpha = r_1^\alpha V
\qquad\displaystyle\text{and}\qquad 
 L_2 = \sum_k |A_k| (1-r_1)^\alpha p_k^\alpha = (1-r_1)^\alpha V
\end{array}
$$
% \begin{align*}
% L_1 &= \sum_i |A_i| r_1^\alpha p_i^\alpha = r_1^\alpha V 
% \end{align*}
% \begin{align*}
% \text{and}\qquad 
%  L_2 = \sum_i A_i (1-r_1)^\alpha p_i^\alpha = (1-r_1)^\alpha V
% \end{align*}
% Then,
% \begin{align*}
% L_2 &= (1-r_1)^\alpha \frac{L_1}{r_1^\alpha}\qquad
% \text{thus}\qquad r_1 = \frac{1}{1+\left(\frac{L_2}{L_1}\right)^ {1/\alpha}} = \pi_1
% \end{align*}
Then, $\ r_1 = {1}\left/\left({1+\left({L_2}/{L_1}\right)^
      {1/\alpha}}\right)\right.  = \pi_1$.\qed
\end{proof}

\begin{lemma}
  \label{lem:ratequiv}
  Let $G$ be the parallel composition of tasks $T_1$ and $T_2$, with
  $p(t)$ a step function, and $\s$ an optimal schedule. Then, the
  makespan of $G$ under $\s$ is equal to the makespan of the task
  $T_G$ of length $\LG G = \LG{\para 12}$.
\end{lemma}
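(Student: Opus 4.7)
The plan is to apply Lemma~\ref{lem:rate} to pin down the structure of any optimal schedule $\s$ for $G$ and then compute its makespan directly. By that lemma, the ratio allocated to $T_1$ is constantly $\pi_1 = L_1^{1/\alpha}/\LG{G}^{1/\alpha}$ throughout the execution, and by Lemma~\ref{lem:allproc} the full processing power $p(t)$ is used at every instant, so the ratio allocated to $T_2$ is constantly $\pi_2 = 1-\pi_1 = L_2^{1/\alpha}/\LG{G}^{1/\alpha}$.

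With these constant ratios, the amount of work performed on $T_i$ up to time $t$ equals $\int_0^t (\pi_i\, p(\tau))^\alpha d\tau = \pi_i^\alpha \int_0^t p(\tau)^\alpha d\tau$. Hence the completion time $C_i$ of $T_i$ is the smallest $t$ satisfying
\begin{equation*}
\int_0^t p(\tau)^\alpha d\tau \;=\; \frac{L_i}{\pi_i^\alpha}.
\end{equation*}
The key algebraic observation, which is exactly why $\LG{\para{G_1}{G_2}}$ was defined with that particular formula, is that $\pi_i^\alpha = L_i/\LG{G}$, so the right-hand side equals $\LG{G}$ independently of $i$. Therefore $C_1 = C_2$ and this common value is the makespan of $\s$.

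It remains to compare with the makespan of a lone task $T_G$ of length $\LG{G}$ executed on the same processor profile. By Lemma~\ref{lem:allproc} applied to the one-task instance, $T_G$ is allocated all of $p(t)$ at every instant, so its completion time is the smallest $t$ for which $\int_0^t p(\tau)^\alpha d\tau = \LG{G}$, which is precisely the equation above. The two makespans therefore coincide, proving the lemma. The argument is short and essentially algebraic; no real obstacle remains once Lemma~\ref{lem:rate} is in hand, so the only delicate point is noticing that the definition of $\LG{\para 12}$ makes $L_i/\pi_i^\alpha$ collapse to a single value shared by both tasks, which is what synchronizes their completion and matches the equivalent single-task schedule.
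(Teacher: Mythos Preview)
Your proof is correct and follows essentially the same approach as the paper's: apply Lemma~\ref{lem:rate} to fix the constant ratios $\pi_1,\pi_2$, compute the work done on each task, and observe that $L_i/\pi_i^\alpha=\LG{G}$ forces simultaneous completion at exactly the time the single equivalent task $T_G$ would finish. The paper phrases the computation as a discrete sum over the step intervals $(A_k,p_k)$ rather than an integral, but for a step function this is the same thing; your version makes the key algebraic identity $\pi_i^\alpha = L_i/\LG{G}$ more explicit than the paper's sketch.
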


\begin{proof}  
  We characterize $p(t)$ by the sequences $(A_k)$ and $(p_k)$ as in the proof of
  Lemma~\ref{lem:rate}. We know by Lemma~\ref{lem:rate} that the share allocated
  to $T_1$ is constant and equal to $\pi_1p_k$ on each interval $A_k$.
  Then, by summing the work done on each interval for both tasks, one can prove
  that they are completed simultaneously, and that this completion time is the
  same as that of task $T_G$ under the same processor profile.\qed
\end{proof}

\begin{theorem}
  \label{th:step}
  For every graph $G$, if $p(t)$ is a step function, $G$ has the same
  optimal makespan as its equivalent task $T_G$ of length $\LG
  G$ (computed as in Definition~\ref{def.eq-task}). Moreover, there is a unique optimal schedule, and it can be
  computed in polynomial time.
\end{theorem}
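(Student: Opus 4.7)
The plan is a structural induction on the SP graph $G$, using the recursive definition of $\LG G$. The base case is a single task $T_i$: then $\LG G = L_i$ and $T_G = T_i$, so the optimal makespan coincides trivially and Lemma~\ref{lem:allproc} forces the unique schedule in which $T_i$ receives all of $p(t)$ at all times. For the inductive step, $G$ is either $\seri{G_1}{G_2}$ or $\para{G_1}{G_2}$, with both $G_1$ and $G_2$ strictly smaller SP subgraphs. I will dispatch the two cases separately.

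For the series composition $\seri{G_1}{G_2}$, the definition of series composition makes every sink of $G_1$ a predecessor of every source of $G_2$, so no task of $G_2$ can be started before $G_1$ is entirely completed. Hence any optimal schedule splits the time axis at some instant $t_1$: on $[0,t_1]$ it schedules $G_1$ under the restriction of the step function $p(t)$, and on $[t_1,\tau]$ it schedules $G_2$ under the restriction of $p(t)$ to that interval (still a step function). By induction, each restriction has a unique optimal schedule and the corresponding makespans are those of single tasks of lengths $\LG{G_1}$ and $\LG{G_2}$. The first determines $t_1$ uniquely (by Lemma~\ref{lem:allproc}, $\int_0^{t_1} p(t)^\alpha dt = \LG{G_1}$), and the second then fixes $\tau$ with $\int_0^\tau p(t)^\alpha dt = \LG{G_1} + \LG{G_2} = \LG G$, which is exactly the makespan of an equivalent task of length $\LG{\seri{G_1}{G_2}}$.

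For the parallel composition $\para{G_1}{G_2}$, I would generalize Lemmas~\ref{lem:rate} and~\ref{lem:ratequiv} from pairs of tasks to pairs of SP subgraphs. The key idea is to collapse each subgraph $G_j$ into its equivalent task $T_{G_j}$ of length $\LG{G_j}$, justified by the induction hypothesis. Denote by $R_j(t)$ the total share of processors allocated to $G_j$ at time $t$. Repeating the concavity/Jensen argument of Lemma~\ref{lem:rate} on $R_1$ and $R_2 = 1 - R_1$, but this time using the fact that in any optimal schedule each subgraph itself behaves like a single task (by induction), I obtain that $R_1$ is constant and equal to $\pi_1 = \LG{G_1}^{1/\alpha}/\LG G^{1/\alpha}$, and similarly for $R_2$. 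Once $R_1, R_2$ are constant, the effective processor profile seen inside each $G_j$ is $\pi_j \cdot p(t)$, which is again a step function; the induction hypothesis therefore provides the unique optimal subschedule for each $G_j$. Summing work as in the proof of Lemma~\ref{lem:ratequiv} yields simultaneous completion of $G_1$ and $G_2$ at the moment when an equivalent task of length $\LG{\para{G_1}{G_2}}$ would finish.

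Uniqueness falls out of the induction together with the unique choice of $\pi_1, \pi_2$ in the parallel case and of $t_1$ in the series case. The polynomial running time follows by the usual bottom-up computation: $\LG G$ is evaluated by a single postorder traversal that performs one application of $f$ or $g$ per internal node, and the schedule itself is reconstructed by a topdown traversal that propagates the ratios $\pi_1, \pi_2$ at each parallel node, both using the oracles assumed for $f$ and $g$. The main obstacle I expect is the parallel step: one must justify rigorously that the hypothesis of Lemma~\ref{lem:rate} still applies to subgraphs rather than tasks, i.e., that the concavity argument goes through when $G_1$ and $G_2$ are not atomic but have already been reduced to equivalent tasks of lengths $\LG{G_1}, \LG{G_2}$ by induction, and that this reduction is preserved under a constant rescaling of $p(t)$ by $\pi_j$.
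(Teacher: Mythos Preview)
Your proposal is correct and follows essentially the same structural induction as the paper. The obstacle you flag in the parallel case is exactly the one the paper addresses: rather than redoing the Jensen argument for subgraphs, the paper first observes (via Lemma~\ref{lem:esc}) that in any optimal schedule under a step profile every \emph{task} allocation is piecewise constant, hence the total allocation $R_j(t)\,p(t)$ to each subgraph $G_j$ is automatically a step function; this lets the induction hypothesis replace $G_j$ by $T_{G_j}$ and then Lemmas~\ref{lem:rate} and~\ref{lem:ratequiv} apply verbatim to the two-task graph $\para{T_{G_1}}{T_{G_2}}$, yielding $R_1\equiv\pi_1$ without any new concavity computation.
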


\begin{proof}
  In this proof, we only consider optimal schedules. Therefore, when
  the makespan of a graph is considered, this is implicitly its
  optimal makespan.
  We first remark that in any optimal schedule, as $p(t)$ is a step
  function and because of Lemma \ref{lem:esc}, only step functions are
  used to allocate processors to tasks, and so Lemma
  \ref{lem:ratequiv} can be applied on any subgraph of $G$ without
  checking that the \fctofproc is also a step function for this
  subgraph.
  We now prove the result by induction on the structure of $G$.
  \begin{itemize}
  \item $G$ is a single task. The result is immediate.

  \item $G$ is the series composition of $G_1$ and $G_2$. By
    induction, $G_1$ (resp. $G_2$) has the same makespan as task
    $T_{G_1}$ (resp. $T_{G_2}$) of length $\LG {G_1}$ (resp. $\LG {G_2}$) under any
    \fctofproc. Therefore, the makespan of $G$ is equal to $\LG G = \LG {\seri {G_1}{G_2}} =
    \LG {G_1} + \LG {G_2}$.
    The unique optimal schedule of $G$ under $p(t)$ processors is the
    concatenation of the optimal schedules of $G_1$ and $G_2$.

  \item $G$ is the parallel composition of $G_1$ and $G_2$. By
    induction, $G_1$ (resp. $G_2$) has the same makespan as task
    $T_{G_1}$ (resp. $T_{G_2}$) of length $\LG {G_1}$ (resp. $\LG {G_2}$) under any
    \fctofproc.
    Consider an optimal schedule $\s$ of $G$ and let $p_1(t)$ be the
    \fctofproc allocated to $G_1$. Let $\tilde \s$ be the schedule of
    $(\para{T_{G_1}}{T_{G_2}})$ that allocates $p_1(t)$ processors to
    $T_{G_1}$. $\tilde \s$ is optimal and achieves the same makespan as $\s$
    for $G$ because $T_{G_1}$ and $G_1$ (resp. $T_{G_2}$ and $G_2$) have the
    same makespan under any \fctofproc.  Then, by Lemma
    \ref{lem:ratequiv}, $\tilde \s$ (so $\s$) achieves the same makespan as
    the optimal makespan of the task $T_G$ of length $\LG {\para{G_1}{G_2}} = \LG G$.
    Moreover, by Lemma \ref{lem:rate} applied on $(\para{T_{G_1}}{T_{G_2}})$,
    we have $p_1(t) = \pi_1 p(t)$. By induction, the unique optimal
    schedules of $G_1$ and $G_2$ under respectively $p_1(t)$ and
    $(p(t)-p_1(t))$ processors can be computed. Therefore, there is a
    unique optimal schedule of $G$ under $p(t)$ processor: the
    parallel composition of these two schedules.  
  \end{itemize}
  
  Therefore, there is a unique optimal schedule for $G$ under $p(t)$. Moreover,
  it can be computed in polynomial time. We describe here the algorithm to
  compute the optimal schedule of a tree $G$, but
  it can be extended to treat SP-graphs. The length of the equivalent
  task of each subtree of $G$ can be computed in polynomial time by a
  depth-first search of the tree (assuming that raising a number to the power
  $\alpha$ or $1/\alpha$ can be done in polynomial time). Hence, the ratios
  $\pi_1$ and $\pi_2$ for each parallel composition can also be computed in
  polynomial time. Finally, these ratios imply the computation in linear time of
  the ratios of the processor profile that should be allocated to each task
  after its children are completed, which describes the optimal schedule.\qed
\end{proof}

%/////////////////////////////////////////
%// TODO: REMOVE?
%/////////////////////////////////////////
%
%{\color{red}
%In order to obtain Theorem~\ref{th:step}, we needed to use the value of the speedup function: $f:p\mapsto p^\alpha$ for a fixed $\alpha$. Actually, the shape of the optimal solution as described in the theorem is only valid for such functions. Indeed, let the \emph{scale property} denotes the first statement of Lemma~\ref{lem:rate} (i.e., $r_1(t)$ is independent of $p(t)$ in optimal schedules on $\para{T_1}{T_2}$). Then, Lemma~\ref{lem:scaleprop} proved in \ref{NEW REPORT} states:
%
%\begin{lemma}
%\label{lem:scaleprop}
% The only valid strictly convave and increasing speedup functions that lead to optimal schedules respecting the scale property are of the shape $p\mapsto p^\alpha$.
% \end{lemma}
%}

\section{Extensions to Distributed Memory}
\label{sec:dist}

The objective of this section is to extend the previous results to the
case where the computing platform is composed of several nodes with
their own private memory. In order to avoid the large communication overhead
of processing a task on cores distributed across several
nodes, we forbid such a multi-node execution: the tasks of the tree can
be distributed on the whole platform but each task has to be processed on a
single node. We prove that this additional constraint, denoted by
\R, renders the problem much more difficult. We concentrate first
on platforms with two homogeneous nodes and then with two heterogeneous
nodes.

\subsection{Two Homogeneous Multicore Nodes}
\label{sec:dist-hom}

In this section, we consider a multicore platform composed of two
equivalent nodes having the same number of computing cores $p$. We
also assume that all the tasks $T_i$ have the same speedup function
$p_i^\alpha$ on both nodes.
We first show that finding a schedule with minimum makespan is weakly
NP-complete, even for independent tasks:

\begin{theorem}
  Given two homogenous nodes of $p$ processors, $n$ independent tasks
  of sizes $L_1, ..., L_n$ and a bound $T$, the problem of finding a
  schedule of the $n$ tasks on the two nodes that respects \R, and
  whose makespan is not greater than $T$, is (weakly) NP-complete for
  all values of the $\alpha$ parameter defining the speedup function.
\end{theorem}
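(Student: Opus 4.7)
The plan is to reduce from the classical weakly NP-complete problem \textsc{Partition}. Given positive integers $a_1, \ldots, a_n$ with $\sum_i a_i = 2S$, \textsc{Partition} asks whether some $A \subseteq \{1, \ldots, n\}$ satisfies $\sum_{i \in A} a_i = S$. From such an instance I construct a scheduling instance with $n$ independent tasks of length $L_i = a_i^\alpha$ (when $\alpha = 1$, simply $L_i = a_i$) and deadline $T = (S/p)^\alpha$. Since the paper assumes $x \mapsto x^\alpha$ is computable in polynomial time, this construction is polynomial, and the same assumption places the decision problem in NP (the makespan of any candidate assignment can be computed and compared to $T$ in polynomial time).

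Fix any assignment of tasks to the two nodes. The constraint \R leaves each node processing an independent set of tasks, which is exactly the parallel composition of these tasks. Let $G_A$ denote the parallel composition of the tasks in a subset $A$. Applying Theorem~\ref{th:step} to $G_A$ under the constant processor profile equal to $p$, the optimal makespan on that node equals $\LG{G_A}/p^\alpha$. Unfolding Definition~\ref{def.eq-task} over the parallel composition gives $\LG{G_A} = \left(\sum_{i \in A} L_i^{1/\alpha}\right)^\alpha = \left(\sum_{i \in A} a_i\right)^\alpha$, so the makespan on the node assigned $A$ equals $\left(\sum_{i \in A} a_i\right)^\alpha / p^\alpha$.

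Therefore a schedule with makespan at most $T = (S/p)^\alpha$ exists if and only if the chosen subset $A$ and its complement both satisfy $\sum_{i \in A} a_i \leq S$ and $\sum_{i \notin A} a_i \leq S$. Since these two sums add to $2S$, both inequalities must be equalities, which is exactly the condition for a YES answer to \textsc{Partition}. This yields both directions of the equivalence and completes the reduction.

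The only subtle point, and the reason the argument stays one line rather than a laborious case analysis, is that by Theorem~\ref{th:step} the per-node problem is no longer a joint multi-task malleable allocation but a single equivalent task with a closed-form length. The only conceivable obstacle is the irrationality of the lengths $a_i^\alpha$ when $\alpha < 1$, but this is already sidestepped by the paper's standing convention that $f:x\mapsto x^\alpha$ and $g:x\mapsto x^{1/\alpha}$ can be evaluated in polynomial time; without it, the problem would not even lie in NP. Strong NP-completeness is not claimed because the reduction, like \textsc{Partition} itself, uses numbers whose values may be exponential in the input size.
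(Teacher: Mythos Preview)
Your proof is correct and follows essentially the same route as the paper: reduce from \textsc{Partition}, set $L_i = a_i^\alpha$, and use the closed-form equivalent length from Definition~\ref{def.eq-task} together with Theorem~\ref{th:step} to collapse each node's subproblem to a single number, so that meeting the deadline is equivalent to an exact balanced split of the $a_i$'s. The only cosmetic difference is that the paper (in its long version) fixes $p$ in terms of the \textsc{Partition} data to normalize the deadline to $T=1$, whereas you leave $p$ free and set $T=(S/p)^\alpha$; both choices work since the makespan scales as $p^{-\alpha}$.
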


The proof relies on the Partition problem, which is known to be weakly
(i.e., binary) NP-complete~\cite{gareyjohnson}, and uses tasks of
length $L_i=a_i^\alpha$, where the $a_i$'s are the numbers from the
instance of the Partition problem. We recall that we assume that
functions $x\mapsto x^\alpha$ and $x\mapsto x^{1/\alpha}$ can be
computed in polynomial time.  
\iflong
\todo[inline]{faire une courte preuve de
  NP-complétude}
\else
Details can be found in the companion research report~\cite{RR-ipdps-2014}.
\fi

We also provide a constant ratio approximation algorithm. We recall
that a $\rho$-approximation provides on each instance a solution whose
objective $z$ is such that $z \leq \rho z*$, where $z*$ is the optimal
value of the objective on this instance.

\begin{theorem}
\label{th.43approx}
  There exists a polynomial time \frtrd-approximation algorithm for
  the makespan minimization problem when scheduling a tree of malleable tasks on
  two homogenous nodes.
\end{theorem}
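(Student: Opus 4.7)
The approach is to use the shared-memory optimum on $2p$ processors as a lower bound and to design a partition-based algorithm matching it up to the factor $(4/3)^\alpha$. By Theorem~\ref{th:step} applied to the whole platform treated as shared, any feasible distributed schedule has makespan at least $T^* = L_G/(2p)^\alpha$, since removing the one-node restriction can only help. Writing $W_G = L_G^{1/\alpha}$, the goal is therefore to produce a distributed schedule of makespan at most $(2W_G/3)^\alpha/p^\alpha$, which equals $(4/3)^\alpha T^*$ after factoring out $2^\alpha$.

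The algorithm decomposes the tree $G$ into a collection of disjoint subtrees $H_1,\ldots,H_k$ whose union is $G$, then distributes them into two bins $A_1, A_2$, one per node. Writing $w_j = L_{H_j}^{1/\alpha}$, the target is $\sum_{j \in A_i} w_j \leq 2W_G/3$ for both $i$. Subtrees placed on the same node are pairwise precedence-free (none is an ancestor of another), so they form a parallel composition and by Theorem~\ref{th:step} node $i$'s shared-memory optimal makespan on $p$ processors is $\bigl(\sum_{j \in A_i} w_j\bigr)^\alpha/p^\alpha$, matching the target. I would construct the decomposition so that every $w_j \leq W_G/3$: starting from $\{G\}$, I iteratively split any piece whose work exceeds $W_G/3$ by separating its root from the subtrees rooted at its children. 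The additivity $L_{\para{G_1}{G_2}}^{1/\alpha} = L_{G_1}^{1/\alpha} + L_{G_2}^{1/\alpha}$ ensures that splitting a parallel composition preserves total work, while splitting a series composition strictly decreases it; in all cases the total work after splitting satisfies $W_{\mathrm{split}} \leq W_G$, and each split strictly reduces the heaviest piece so the procedure runs in polynomial time. A greedy assignment that places each item into the currently less-loaded bin then yields $\max_i \sum_{j\in A_i} w_j \leq (W_{\mathrm{split}} + W_G/3)/2 \leq 2W_G/3$ by the standard exchange argument (the bin overflowing would contradict the fact that its last-added item was placed into the lighter bin).

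The main obstacle is the existence of a heavy single task whose own work $w_j$ exceeds $W_G/3$, since such a task cannot be decomposed further. If $W_G/3 < w_j \leq 2W_G/3$, the heavy task still fits alone in a bin whose total work remains $\leq 2W_G/3$, so the bound goes through. If $w_j > 2W_G/3$, then since no task can use more than $p$ processors, every distributed schedule has makespan at least $L_{T_j}/p^\alpha = w_j^\alpha/p^\alpha$; this replaces $T^*$ as the dominant lower bound, and the schedule that places the heavy task alone on one node and everything else on the other matches it. A subtler technical point is that separating an internal task $u$ from the subtrees below it creates a cross-bin precedence if $u$ and those subtrees end up on different nodes, potentially inducing idle time; I would address this by requiring, in the greedy step, that a freshly separated root $u$ be placed in the same bin as at least one of its children's subtrees, which absorbs the precedence locally and only marginally perturbs the bin-balancing argument above.
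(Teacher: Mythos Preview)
Your argument has a fundamental gap in the lower bound. You aim to show that your schedule has makespan at most $(4/3)^\alpha T^*$ with $T^*=\LG{G}/(2p)^\alpha$, but this target is in general unattainable by \emph{any} schedule respecting~\R. Take a chain of three tasks with $L_1=L_2=L_3=1$. Then $\LG{G}=3$, $T^*=3/(2p)^\alpha$, and every distributed schedule has makespan at least $3/p^\alpha=2^\alpha T^*>(4/3)^\alpha T^*$ because each task can use at most $p$ processors. Your heavy-task escape hatch does not fire here: $W_G=3^{1/\alpha}$, so $W_G/3=3^{1/\alpha-1}\ge 1$ for all $\alpha\le 1$, and every individual task has $w_i=1\le W_G/3$. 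Thus your analysis proves an impossible bound on this instance. The paper avoids this by comparing not to $T^*$ but to the makespan $M_p$ of the schedule $\s_p$ that allocates exactly $p$ processors to the heaviest child $c_1$; Lemmas~9--11 establish $M_p\le\mopt$, which is a much tighter and \R-aware lower bound, and the $(4/3)^\alpha$ factor is then obtained recursively on the residual graph.

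There is a second, independent gap in the scheduling step. After your decomposition the pieces are \emph{not} precedence-free: a separated internal node $u$ still requires \emph{all} of its children's subtrees to complete before it can start. Your fix---placing $u$ in the same bin as one child subtree---does not absorb the dependency on the other children that land in the opposite bin, so idle time reappears and the per-node makespan is no longer $\bigl(\sum_{j\in A_i}w_j\bigr)^\alpha/p^\alpha$. Relatedly, your claim in the case $w_j>2W_G/3$ that the heavy-task-alone schedule ``matches'' the bound $L_{T_j}/p^\alpha$ is false whenever $T_j$ has ancestors or descendants: those add serially to the makespan on top of $L_{T_j}/p^\alpha$.
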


%%%%%%%%%% next paragrpah is only for the IPDPS version
\iflong\else Due to lack of space, we refer the interested reader to
the companion research report
for the complete description of the
algorithm and proof~\cite{RR-ipdps-2014}.  The proof of the
approximation ratio consists in comparing the proposed solution to the
optimal solution on a single node made of $2p$ processors, denoted
\spm. Such an optimal solution can be computed as proposed in the
previous section, and is a lower bound on the optimal makespan on 2
nodes with $p$ processors. The general picture of the proposed
algorithm is the following. First, the root of the tree is arbitrarily
allocated to
the $p$ processors of one of the two nodes. Then, the subtrees $S_i$'s
rooted at the root's children are considered. If none of these
subtrees is allocated more than $p$ processors in \spm, then we show
how to ``pack'' the subtrees on the two nodes and bound the slow-down
by \frtrd. On the contrary, if one of the $S_i$'s is allocated more
than $p$ processors in \spm, then we allocate $p$ processors to its
root, and recursively call the algorithm on  its children and on the
remaining subtrees.
\fi

%%%%%%%%%% what follows in only for the RR
\iflong
The proof of this theorem is done by induction on the structure of the
tree and relies on the following lemmas. The approximation algorithm
is summarized in Algorithm~\ref{alg:hybapp}.  Two initialization cases
are depicted. One of them is easily handled, and the second one is
handled by Lemma \ref{lem:x<1}. The heredity property is proved in the
theorem, and uses Lemmas
\ref{lem:muopt},\ref{lem:uopt},\ref{lem:slb}. Lemma
\ref{lem:chainroot} allows the restriction to a slightly simpler class
of graphs.

Let \sopt be a makespan-optimal schedule, and \mopt be its makespan.
We consider the optimal schedule $\spm$ of $G$ on $2p$ processors
without the constraint \R. $\spm$ is then a PM so a PFC schedule. Its
makespan is $\mpm_{2p} = \LG G \big/ (2p)^\alpha$, which is then a
lower bound of the optimal makespan with the restriction \R.

One can observe that a $2^\alpha$ approximation is immediate: a
solution is the PM schedule of $G$ under only $p$ processors, whose
makespan is $\mpm_p = \LG G \big / p^\alpha$. As the optimal makespan
is not smaller than $\mpm_{2p}$, $\mpm_p$ is indeed a
$2^\alpha$-approximation.

Let $\{c_i\}$, for $i\in [1,n_c]$, be the set of children of the root
of $G$, and let $C_i$ be the subtree of $G$ rooted at $c_i$ and
including its descendants. We can suppose than the indices are ordered
such that the $\LG{C_i}$'s are in decreasing order. We denote
$\sigma_c= \sum_{i=1}^{n_c} \LG{C_i}^{1/\alpha}$.

We denote $x=\frac{2\LG{C_1}^{1/\alpha}}{\sigma_c}$, which means that
$xp$ processors are dedicated to $C_1$ in \spm.

The following lemma, whose proof is immediate, allows to restrict the
following discussion on a slightly simpler class of graphs:

\begin{lemma}
\label{lem:chainroot}
We can suppose without loss of generality that the length of the root
of $G$ is $0$ and the root has at least two children. Otherwise, the
chain starting at the root can be aggregated in a single task of
length $0$ before finding the schedule on this modified graph
$\tilde{G}$. It is then immediate to adapt it to the original graph,
by allocating $p$ processors to each task of this chain.
\end{lemma}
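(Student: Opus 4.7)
The plan is to show that the root chain of $G$ contributes an irreducible prefix of duration $\LG{chain}/p^\alpha$ to any valid schedule under \R, so that the rest of the tree can be handled in isolation by the recursive algorithm. Concretely, starting from the root I walk down through nodes having exactly one child, stopping at the first node $v$ with at least two children (or at a leaf, in which case $G$ is itself a chain and the claim is trivial). Let $L_{chain}$ be the sum of the lengths of the tasks visited, and let $\tilde G$ be obtained from $G$ by collapsing this chain into a single root task of length $0$, whose children are the children of $v$. By construction, $\tilde G$ has a root of length $0$ with at least two children, which are the two conditions demanded by the lemma.

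I would then establish the lower bound $M^*(G) \geq \LG{chain}/p^\alpha + M^*(\tilde G)$. The constraint \R forces each chain task to run entirely on a single node and thus to receive at most $p$ processors at any time, so its individual duration is at least $L_i/p^\alpha$. Because the chain tasks are in series and every other task of $G$ depends on them through the precedence structure, no descendant can begin before time $\sum_i L_i/p^\alpha = \LG{chain}/p^\alpha$. After this prefix, scheduling the remainder is exactly the problem on $\tilde G$, which requires at least $M^*(\tilde G)$ additional time.

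Conversely, from any schedule for $\tilde G$ I would build a schedule for $G$ by prepending the execution of the chain tasks one after another on a single full node of $p$ processors, then running the $\tilde G$ schedule on both nodes. The resulting makespan is exactly $\LG{chain}/p^\alpha + M(\tilde G)$. Combining this equality with the inductive approximation guarantee $M(\tilde G) \leq \frtrd\, M^*(\tilde G)$ on the strictly smaller instance, together with the lower bound above, yields $M(G) \leq \frtrd\, M^*(G)$, so the $\frtrd$-approximation transfers from $\tilde G$ to $G$. The only subtle point, and the reason the authors call the proof immediate, is the observation that the second node is necessarily idle during the chain phase: this apparent waste cannot be avoided by any schedule because every remaining task is a descendant of the chain, so the lower bound is tight and the reduction is indeed without loss of generality.
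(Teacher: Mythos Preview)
The paper itself offers no proof beyond calling the lemma ``immediate,'' so your argument already supplies more detail than the authors do. The structure you propose---show that the chain contributes an unavoidable additive term $L_{\mathrm{chain}}/p^\alpha$ to every valid makespan under \R, and that allocating a full node of $p$ processors to each chain task realises this term exactly, so that the $\frtrd$ guarantee transfers additively---is precisely the right idea.

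There is, however, one consistent error of orientation. The paper works with an \emph{in-tree}: Section~\ref{sec:model} requires that before any processor time is given to $T_i$, every child of $T_i$ be already complete. Hence the root is the \emph{last} task to execute, not the first, and the chain starting at the root is a \emph{suffix} of every schedule rather than a prefix. Your sentence ``every other task of $G$ depends on them'' has the dependence reversed: it is the chain tasks that depend on everything below $v$, not the other way around. The fix is purely mechanical---exchange ``prepend'' for ``append'' and swap the order of the two phases in both the lower-bound argument and the construction---and once made, your reasoning goes through unchanged, including the observation that the second node is forced to idle during the chain phase.
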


%\begin{proof}
%Suppose we have a schedule $\tilde{\s}$ of $\tilde{G}$ of makespan $\tilde{M}$ that achieves a $\beta$-approximation ($\beta>1$). We want to build a schedule $\s$ of $G$ of makespan $M$ that also achieves a $\beta$-approximation. Let $\tilde{\sopt}$ be the optimal schedule of $\tilde{G}$, of makespan $\tilde{\mopt}$.
%
% We build $\s$ by scheduling the root $R$ of $G$ on $p$ processors after scheduling $\tilde{G}$ as in $\tilde{\s}$. Because of $\R$, we know that $\mopt = \frac{R}{p^\alpha} + \tilde{\mopt}$. And we have $$M = \frac{R}{p^\alpha} + \tilde{M} \leq \frac{R}{p^\alpha} + \beta \tilde{\mopt} \leq  \beta \mopt$$.
%\end{proof}

\begin{lemma}
\label{lem:x<1}
If we have $\ x \leq 1$, then a $ \frtrd$-approximation is computable in polynomial time.
\end{lemma}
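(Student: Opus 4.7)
The plan is to reduce the scheduling problem, under the hypothesis $x\le 1$, to a two-machine scalar load-balancing problem and solve it by Longest Processing Time (LPT). By Lemma~\ref{lem:chainroot} I may assume the root has length $0$ and at least two children, with associated subtrees $C_1,\ldots,C_{n_c}$; I write $y_i=\LG{C_i}^{1/\alpha}$, ordered so that $y_1\ge y_2\ge\cdots\ge y_{n_c}$, and $\sigma_c=\sum_i y_i$. The hypothesis $x\le 1$ then reads exactly $y_1\le\sigma_c/2$.

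Under \R any valid schedule induces a partition $(A,B)$ of the subtrees between the two nodes. Once such a partition is fixed, Theorem~\ref{th:step} applied separately inside each node (using Definition~\ref{def.eq-task} on the parallel composition of the subtrees assigned to that node) shows that the best achievable makespan for the partition is $\max(S_A,S_B)^\alpha/p^\alpha$, where $S_A=\sum_{i\in A} y_i$. The shared-memory optimum on $2p$ processors has makespan $\mpm_{2p}=\sigma_c^\alpha/(2p)^\alpha$ and is a lower bound on \mopt; since $(\tfrac{2}{3}\sigma_c/p)^\alpha=\frtrd\cdot\sigma_c^\alpha/(2p)^\alpha$, the algorithm I propose will be a $\frtrd$-approximation as soon as it produces a partition with $\max(S_A,S_B)\le\tfrac{2}{3}\sigma_c$.

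The algorithm itself is the classical LPT rule: sweep the $y_i$'s in decreasing order and place each one on the currently less loaded node; this runs in $O(n_c\log n_c)$ time. To analyze it, I will let $y_k$ be the last item placed on the heavier node $A$. Since $A$ was the lighter node just before $y_k$ arrived, the standard LPT invariant yields $S_A-y_k\le(\sigma_c-y_k)/2$, hence $S_A\le\sigma_c/2+y_k/2$.

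The main (and only non-routine) obstacle is to use the hypothesis $y_1\le\sigma_c/2$ to conclude $S_A\le\tfrac{2}{3}\sigma_c$. If this failed, the bound above would force $y_k>\sigma_c/3$; the sorted order then makes $y_1,\ldots,y_k$ all exceed $\sigma_c/3$, so summing against $\sigma_c$ gives $k\le 2$. The case $k=1$ contradicts $S_A=y_1\le\sigma_c/2$ immediately. For $k=2$, LPT necessarily sends $y_1$ and $y_2$ to distinct nodes; if $y_2$ is declared the last item on the heavier node, then every subsequent $y_i$ lies on the other node, which accumulates at least $y_1\ge y_2$ of load and contradicts the choice of heavier node except in the balanced case $y_1=y_2$ where already $S_A=\sigma_c/2$. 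In every case $\max(S_A,S_B)\le\tfrac{2}{3}\sigma_c$, and the $\frtrd$-approximation follows by comparison with $\mpm_{2p}$.
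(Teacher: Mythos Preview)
Your proof is correct and reaches the same $\frtrd$ bound as the paper, but by a genuinely different route. The paper first argues that when $x\le 1$ and $n_c\ge 3$ one can greedily split the $C_i$'s into \emph{three} groups each of total share at most $p$, then puts the largest group alone on one node and the union of the two others on the second node; since the largest group carries at least $2p/3$ of the $2p$ processors in $\spm$, the merged pair carries at most $4p/3$, which yields the ratio. You instead run two-machine LPT directly on the weights $y_i=\LG{C_i}^{1/\alpha}$ and use the standard ``last item on the heavy machine'' argument together with $y_1\le\sigma_c/2$ to force $\max(S_A,S_B)\le\tfrac{2}{3}\sigma_c$. Both arguments are elementary; yours has the advantage of invoking a textbook LPT analysis, while the paper's three-bucket trick avoids the small case analysis on $k$.

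Two small remarks. First, the sentence ``Under \R any valid schedule induces a partition $(A,B)$ of the subtrees between the two nodes'' is not justified (nothing in \R prevents different tasks of the same $C_i$ from sitting on different nodes) and, more importantly, is not needed: you only use $\mpm_{2p}\le\mopt$ as the lower bound, so it suffices to say that you \emph{restrict attention} to schedules that place each $C_i$ wholly on one node. Second, your $k=2$ case can be shortened: since LPT sends $y_1$ and $y_2$ to distinct nodes and $k\le 2$ forces $A\subseteq\{1,2\}$, the heavy node contains a single item, hence $S_A\le y_1\le\sigma_c/2$.
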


\begin{proof}

Let $p_i$ be the share allocated to $C_i$ in \spm. Each $p_i$ is constant because $\spm$ is PFC by definition. By hypothesis, we have $p_i\leq p$ for all $i$, as $\LG{C_1}$ is the largest $\LG{C_i}$ and its share is equal to $xp \leq p$.

 If $n_c=2$, both $p_1$ and $p_2$ are not larger than $p$ so equal $p$. Therefore, the schedule $\spm$ respects restriction $\R$, is then optimal and so is a \frtrd approximation. 
 
 Otherwise, we have $n_c\geq 3$ and we partition the indices $i$ in three sets $S_1$, $S_2$, $S_3$ such that the sum $\Sigma_k$ of $p_i$'s corresponding to each set $S_k$ is not greater than $p$:
$ \forall k \in \{1,2,3\},\ \Sigma_k = \sum_{i \in S_k} p_i \leq p$
 , which is always possible because no $p_i$ is greater than $p$ and the sum of all $p_i$'s is $2p$. Indeed, we just have to iteratively place the largest $p_i$ in the set that has the lowest $\Sigma_k$. If a $\Sigma_k$ exceeds $p$, it was at least equal to $p/2$ at the previous step, and both other $\Sigma_k$ also: the sum of all $p_i$'s then exceeds $2p$, which is impossible.
 
 Then, we place the set with the largest $\Sigma_k$, say $S_1$, on one half of the processing power, and aggregate the two smallest, $S_2\cup S_3$ in the other half. We now compute the PM schedule of $S_1$ with $p$ processors and $S_2\cup S_3$ with $p$ processors. The makespan is then $M = \max{\left(\LG{S_1},\LG{\para{S_2}{S_3}}\right)}\big/p^\alpha= \LG{\para{S_2}{S_3}}\big/p^\alpha$.
  Indeed, we have $\Sigma_1 \leq p \leq \Sigma_2+\Sigma_3$ and $\LG{S_1}\big/\Sigma_1^\alpha = \LG{\para{S_2}{S_3}}\big/(\Sigma_2+\Sigma_3)^\alpha$, as these quantities represent the makespan of each subpart of the tree in $\spm$, and all subtrees $C_i$ terminate simultaneously in \spm.  So $\LG{S_1}<\LG{\para{S_2}{S_3}}$.

 We know that $\Sigma_1 \geq \max\left(\Sigma_2,\Sigma_3\right)$ and $\Sigma_1+\Sigma_2+\Sigma_3 =2p$, so $\Sigma_1\geq \frac 23p$, then $\Sigma_2+\Sigma_3\leq \frac 43 p$.
Therefore, in $\spm$, $\Sigma_2+\Sigma_3 \leq \frac 43p$ processors are allocated to $S_2\cup S_3$.
Then, the makespan of $\spm$ verifies $\mpm_{2p} \geq \LG{\para{S_2}{S_3}} \big/ \left( \frac 43 p \right)^\alpha$, and so $M/\mpm_{2p} \leq \frtrd$. Therefore, the schedule is indeed a \frtrd approximation.\qed
\end{proof}

%%%%%%%%%%%%%%%%%%%%%%%%%%%%%%%%%%%%%%
%%% DEFINITION OF M_u

\begin{defi}
\label{def:Rq}
For any $0<q\leq p$, let $\R_q$ be the constraint that forces $q$ processors to be allocated to $c_1$.
\end{defi}

\begin{figure}[Ht]
\centering
\newcommand{\midy}{50pt}
\newcommand{\topy}{80pt}
\newcommand{\midx}{70pt}
\newcommand{\midlx}{40pt}
\newcommand{\topx}{100pt}

\newcommand{\bota}{55pt}
\newcommand{\leftb}{80pt}
\newcommand{\scalerat}{0.8}
{\begin{tikzpicture}[scale =\scalerat,
every node/.style={inner sep=0pt, minimum size=0pt, anchor=center, transform shape},
label/.style={}
]

\node (bl) at (0,0) {};
\node (tl) at (0,\topy) {};
\node (br) at (\topx,0) {};
\node (tr) at (\topx,\topy) {};
\node (ml) at (0,\midy) {};
\node (mm) at (\midlx,\midy) {};
\node (mr) at (\midx,\midy) {};
\node (bml) at (\midlx,0) {};
\node (bm) at (\midx,0) {};
\node (tm) at (\midx,\topy) {};
\draw (bl)--(tl)--(tr)--(br)-- (bl) --cycle ;
\draw[gray,dashed] (\topx/2,0) -- ++(0,\topy);
\draw(\midlx,\topy) -- (\midlx,\bota);
\draw (0,\bota)-- ++(\leftb,0)
      (\leftb,0) -- ++(0,\bota);

\node (title) at (\topx/2, \topy+15pt) {$\s_u$};

\path (tl)--+(10pt,-10pt) node[label] {$c_1$};
\path (0,\bota)--+(20pt,-10pt) node[label] {$C_1\setminus c_1$};
\path (tm)--+(10pt,-10pt) node[label] {$B_u$};

\draw [gray,decorate,decoration={brace,amplitude=4pt},yshift=-3pt,xshift=0pt]
(\topx/2,0)--(0,0) node [midway,yshift=-0.4cm] 
{\footnotesize $p$};
\draw [gray,decorate,decoration={brace,amplitude=4pt},yshift=-3pt,xshift=0pt]
(\topx,0)--(\topx/2,0) node [midway,yshift=-0.4cm] 
{\footnotesize $p$};
\draw [gray,<->, yshift=4pt]
(0,\topy)--(\midlx,\topy) node [midway,yshift=0.2cm]
{\footnotesize $u$};
\draw [gray,<->]
(0,15pt)--(\leftb,15pt) node [midway,yshift=0.3cm] 
{\footnotesize $v_u$};

\draw [gray,decorate,decoration={brace,amplitude=5pt},xshift=-2pt,yshift=0pt]
(0,\bota) -- (0,\topy) node [midway,left,xshift=-0.3cm] 
{\footnotesize $\Delta_{u,1}$};
\draw [gray,decorate,decoration={brace,amplitude=5pt},xshift=-2pt,yshift=0pt]
(0,0) -- (0,\bota) node [midway,left,xshift=-0.3cm] 
{\footnotesize $\Delta_{u,2}$};

\draw[gray] (\leftb,\bota) -- (\topx,\bota);
\node at (\leftb+10pt,\bota-10pt)  {$\bar B_u$};

\end{tikzpicture}}
\hfill
{\begin{tikzpicture}[scale =\scalerat,
every node/.style={inner sep=0pt, minimum size=0pt, anchor=center, transform shape},
label/.style={}
]
\renewcommand{\midlx}{70pt}
\renewcommand{\leftb}{\midlx}
\node (bl) at (0,0) {};
\node (tl) at (0,\topy) {};
\node (br) at (\topx,0) {};
\node (tr) at (\topx,\topy) {};
\node (ml) at (0,\midy) {};
\node (mm) at (\midlx,\midy) {};
\node (mr) at (\midx,\midy) {};
\node (bml) at (\midlx,0) {};
\node (bm) at (\midx,0) {};
\node (tm) at (\midlx,\topy) {};
\draw (bl)--(tl)--(tr)--(br)-- (bl) --cycle ;
\draw[gray,dashed] (\topx/2,0) -- ++(0,\topy);
\draw(\midlx,\topy) -- (\midlx,\bota);
\draw (0,\bota)-- ++(\leftb,0)
      (\leftb,0) -- ++(0,\bota);

\node (title) at (\topx/2, \topy+15pt) {$\s_{\mathrm{PM}}$};

\path (tl)--+(10pt,-10pt) node[label] {$c_1$};
\path (0,\bota)--+(20pt,-10pt) node[label] {$C_1\setminus c_1$};
\path (tm)--+(15pt,-10pt) node[label] {$B_{xp}$};

\draw [gray,decorate,decoration={brace,amplitude=4pt},yshift=-3pt,xshift=0pt]
(\topx/2,0)--(0,0) node [midway,yshift=-0.4cm] 
{\footnotesize $p$};
\draw [gray,decorate,decoration={brace,amplitude=4pt},yshift=-3pt,xshift=0pt]
(\topx,0)--(\topx/2,0) node [midway,yshift=-0.4cm] 
{\footnotesize $p$};
\draw [gray,<->, yshift=4pt]
(0,\topy)--(\midlx,\topy) node [near start,yshift=0.2cm]
{\footnotesize $xp$};

\draw [gray,decorate,decoration={brace,amplitude=5pt},xshift=-2pt,yshift=0pt]
(0,\bota) -- (0,\topy) node [midway,left,xshift=-0.3cm] 
{\footnotesize $\Delta_{xp,1}$};
\draw [gray,decorate,decoration={brace,amplitude=5pt},xshift=-2pt,yshift=0pt]
(0,0) -- (0,\bota) node [midway,left,xshift=-0.3cm] 
{\footnotesize $\Delta_{xp,2}$};

\draw[gray] (\leftb,\bota) -- (\topx,\bota);
\node at (\leftb+15pt,\bota-10pt)  {$\bar B_{xp}$};

\end{tikzpicture}}
\caption{A schedule $\s_u$, for $u<p$ on the left and the schedule $\spm=\s_{xp}$ on the right}
\label{fig:hybsu}
\end{figure}

\begin{defi}
\label{def:su}
We denote by $B$ the subgraph $G\setminus \mathit{root} \setminus C_1$.

We define the schedule $\s_u$ parametrized by $u\in]0,p]\cup\{xp\}$,
which respects $\R_u$ but not $\R$. It allocates a constant share
$u\leq p$ of processors to $c_1$ until it is terminated. Meanwhile,
$2p-u$ processors are allocated to schedule a part $B_u$ of $B$. $B_u$
may contain fractions of tasks. Before, the rest of the graph, which
is composed of $C_1\setminus c_1$ and of the potential remaining part
$\bar B_u$ of $B$, is scheduled on $2p$ processors by a PM schedule,
regardless of the $\R$ constraint. We denote by $v_u$ the share
allocated to $C_1\setminus c_1$ and by $M_u$ the makespan of the
schedule. See Fig.~\ref{fig:hybsu} for an illustration. 
Let $G_{u,1}$ be the graph $\para{c_1}{B_{u}}$ and $G_{u,2}$ be the graph $\para{\left(C_1\setminus c_1\right)}{\bar B_u}$.
We denote by $\Delta_{u,1}$  (resp. $\Delta_{u,2}$) the execution time of $G_{u,1}$ (resp. $G_{u,2}$) in $\s_u$. Then, $M_u=\Delta_{u,1}+\Delta_{u,2}$.

One can note that the PM schedule $\spm$ is equal to $\s_{xp}$, where $u=v_u=xp$.

\end{defi}

\begin{lemma}
\label{lem:muopt}
For any $u\in]0,p]$, under the constraint $\R_u$, the makespan-optimal schedule is $M_u$.
\end{lemma}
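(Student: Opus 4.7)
The plan is to show that $M_u$ is a lower bound on the makespan of any schedule satisfying $\R_u$; since $\s_u$ itself attains $M_u$, optimality follows. The first step is to invoke Theorem~\ref{th:step} to replace the two SP-subgraphs $B = G \setminus \mathit{root} \setminus C_1$ and $C_1 \setminus c_1$ by their equivalent single tasks of lengths $\LG B$ and $\LG{C_1 \setminus c_1}$. This reduction is legitimate because the makespan of each subgraph under any step processor profile coincides with that of its equivalent task, so one may reason about the internal schedules of $B$ and $C_1 \setminus c_1$ purely through the integrated $p^\alpha$-mass devoted to them.

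Next, I would argue via an exchange argument that, without loss of generality, $c_1$ runs in a single contiguous interval $[M - \Delta_{u,1}, M]$ at the very end of the schedule, where $\Delta_{u,1} = L_{c_1}/u^\alpha$ is forced by $\R_u$. Given any valid schedule where $c_1$ runs on $[\tau, \tau + \Delta_{u,1}]$ with $\tau + \Delta_{u,1} < M$, one may shift its $u$-processor block to the end of the timeline: the precedence constraint on $C_1 \setminus c_1$ is only relaxed (it now has to complete by $M - \Delta_{u,1} \geq \tau$), and the processor profile available to $B$ over the union of the two swapped intervals integrates the same $p^\alpha$-mass, so Theorem~\ref{th:step} still guarantees that $B$ completes by time $M$. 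A standard Jensen-based argument in the spirit of Lemma~\ref{lem:esc} further rules out non-contiguous executions of $c_1$.

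With $c_1$ placed at the end, the lower bound falls out quickly. During $[M - \Delta_{u,1}, M]$ the residual $2p - u$ processors work on $B$, so by monotonicity of $x \mapsto x^\alpha$ the contributed $p^\alpha$-work is at most $(2p - u)^\alpha \Delta_{u,1} = \LG{B_u}$. Hence at least $\LG B - \LG{B_u} = \LG{\bar B_u}$ of $B$-work (or $0$ if this quantity is non-positive) must be accomplished during $[0, M - \Delta_{u,1}]$ concurrently with the full $\LG{C_1 \setminus c_1}$ work on $C_1 \setminus c_1$, sharing $2p$ processors. Applying Theorem~\ref{th:step} to the parallel composition $\para{C_1 \setminus c_1}{\bar B_u}$ yields a minimum duration of $\LG{\para{C_1 \setminus c_1}{\bar B_u}}/(2p)^\alpha = \Delta_{u,2}$ for this phase, whence $M \geq \Delta_{u,1} + \Delta_{u,2} = M_u$.

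The main obstacle is the exchange argument placing $c_1$ at the end: the processor profiles of $B$, $C_1 \setminus c_1$ and $c_1$ must be reshuffled so that the modified schedule remains valid and achieves the same or smaller makespan, and the constraint that $C_1 \setminus c_1$ finish before $c_1$ starts must be preserved. The reduction to equivalent tasks via Theorem~\ref{th:step} is essential here, since it lets one argue that only the integrated $p^\alpha$-profile matters for the completion of $B$ and $C_1 \setminus c_1$, rather than any internal ordering of their sub-tasks. Once $c_1$ is forced to the end, the lower bound on $M$ is immediate from the monotonicity of $x \mapsto x^\alpha$ and a single further application of Theorem~\ref{th:step}.
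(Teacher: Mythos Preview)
Your proof is correct and follows essentially the same route as the paper: both arguments first rearrange the schedule so that $c_1$ occupies a contiguous block at the end (the paper moves the trailing $B$-only segment to the front, you shift $c_1$ to the tail---dual views of the same exchange), and then invoke the PM-optimality of Theorem~\ref{th:step} on the remaining phase to conclude. The only cosmetic difference is that the paper characterizes the optimal schedule directly as $\s_u$ (arguing any deviation from PM ratios inside $B$ or $G_{u,2}$ can be improved), whereas you replace $B$ and $C_1\setminus c_1$ by their equivalent tasks up front and cast the second half as a lower bound $M\geq \Delta_{u,1}+\Delta_{u,2}=M_u$ matched by $\s_u$; one minor remark is that Jensen is not what rules out non-contiguous runs of $c_1$ (the allocation is already the constant $u$ under $\R_u$), the contiguity follows from the same exchange argument you use to push $c_1$ to the end.
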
 
 
\begin{proof}
Let $\s$ be the the makespan-optimal schedule that respects the constraint $\R_u$. We want to show that $\s= \s_u$.

First, suppose that $c_1$ terminates before $B$ in $\s$. This means that a time range is dedicated to schedule $B$ at the end of the schedule. We can modify slightly $\s$ by moving this time range to the beginning of the schedule. This is possible as there is no heredity constraint between $B$ and $C_1$, and the same tasks of $B$ can be performed in the new processor profile, by a PM schedule on $B$. So we now assume that the schedule terminates at the execution of $c_1$.

Because of $\R_u$, $\s$ must allocate $u$ processors to $c_1$ at the end of the schedule. In parallel to $c_1$, only $B$ can be executed, and before the execution of $c_1$, both subgraphs $B$ and $C_1\setminus c_1$ can be executed.

Suppose that $\s$ differs from $\s_u$ in the execution of $B$ in parallel to $c_1$. Consider the schedule of $C_1$ fixed, and let $p_b(t)$ be the number of processors allocated to $B$ at the time $t$ in $\s$. As $B_u$ is scheduled according to PM ratios in $\s_u$, and $\s$ differs from this schedule, in $\s$, $B$ is not scheduled according to PM ratios under the processor profile $p_b(t)$. Then, this schedule can be modified to schedule $B$ in a smaller makespan, and then to schedule the whole graph $G$ in a smaller makespan, which contradicts the makespan-optimality of $\s$.

So $\s$ and $\s_u$ are equal during the time interval $\Delta_{u,1}$. Then, it remains to schedule the graph $G_{u,2}=\para{\left(C_1\setminus c_1\right)}{\bar B_u}$, which has a unique optimal schedule, the PM schedule, that is followed by $\s_u$. Therefore, $\s=\s_u$.\qed
\end{proof}

\begin{lemma}
\label{lem:uopt}
$\s_p$ is the makespan-optimal schedule among the $\s_w$ for $w\in]0,p]$, i.e., we have $p=\argmin_{w\in]0,p]}\left(M_w\right)$.

\end{lemma}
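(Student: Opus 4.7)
The plan is to reparametrize $\s_u$ by the quantity $\beta=\LG{B_u}=L_{c_1}(2p-u)^\alpha/u^\alpha$, i.e., the equivalent length of the prefix of $B$ processed in phase~2 alongside $c_1$. The map $u\mapsto\beta$ is a strictly decreasing bijection from $]0,2p[$ onto $]0,+\infty[$, so the constraint $u\in\,]0,p]$ translates to $\beta\in[L_{c_1},+\infty[$, with $u=p$ corresponding to the left endpoint $\beta=L_{c_1}$; feasibility of $\s_u$ further restricts the working range to $[L_{c_1},\LG{B}]$.

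Next, I would give a closed-form expression for $M_u$ as a function of $\beta$. The key observation is that, with this $\beta$, the imposed ratio $u/(2p) = L_{c_1}^{1/\alpha}/(L_{c_1}^{1/\alpha}+\beta^{1/\alpha})$ is \emph{exactly} the PM ratio assigned to $c_1$ inside $\para{c_1}{B_u}$ by Lemma~\ref{lem:rate}. Hence phase~2 is itself a PM schedule of $\para{c_1}{B_u}$ on $2p$ processors, and by construction phase~1 is a PM schedule of $\para{(C_1\setminus c_1)}{\bar B_u}$ on the same $2p$ processors. Summing the two phase durations and applying Definition~\ref{def.eq-task} yields
\begin{equation*}
M_u = \frac{F(\beta)}{(2p)^\alpha}, \quad F(\beta) = \bigl(L_{c_1}^{1/\alpha}+\beta^{1/\alpha}\bigr)^\alpha + \bigl(\LG{C_1\setminus c_1}^{1/\alpha}+(\LG{B}-\beta)^{1/\alpha}\bigr)^\alpha.
\end{equation*}

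To locate the minimizer of $F$ on $[L_{c_1},\LG{B}]$, set $b=\beta^{1/\alpha}$ and $d=(\LG{B}-\beta)^{1/\alpha}$; a direct differentiation telescopes to
\begin{equation*}
F'(\beta) = \left(\frac{b}{L_{c_1}^{1/\alpha}+b}\right)^{1-\alpha} - \left(\frac{d}{\LG{C_1\setminus c_1}^{1/\alpha}+d}\right)^{1-\alpha},
\end{equation*}
so $F'(\beta)\ge 0$ iff $b\,\LG{C_1\setminus c_1}^{1/\alpha}\ge d\,L_{c_1}^{1/\alpha}$. The left-hand side is strictly increasing in $\beta$ and the right-hand side strictly decreasing, so $F$ is strictly convex with unique unconstrained minimizer $\beta^{\star}=L_{c_1}\LG{B}/\LG{C_1}$. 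In the regime where Lemma~\ref{lem:uopt} is invoked, namely $x>1$ (the complementary case $x\le 1$ being dispatched by Lemma~\ref{lem:x<1}), the relation $\LG{B}^{1/\alpha}=((2-x)/x)\LG{C_1}^{1/\alpha}$ forces $\LG{B}\le\LG{C_1}$, whence $\beta^{\star}\le L_{c_1}$. Consequently $F$ is strictly increasing on $[L_{c_1},\LG{B}]$, and its minimum is attained at $\beta=L_{c_1}$, i.e., at $u=p$.

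The main obstacle is the verification that $\s_u$ genuinely decomposes into two nested PM schedules on $2p$ processors, since this is what makes the closed form of $F$ so clean; once this decomposition is established the remainder is a single-variable convexity exercise, with the substitutions $b=\beta^{1/\alpha}$ and $d=(\LG{B}-\beta)^{1/\alpha}$ driving the algebraic simplifications, and the hypothesis $x>1$ arising naturally as the condition that places $\beta^{\star}$ to the left of the feasible interval.
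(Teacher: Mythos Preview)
Your approach is genuinely different from the paper's. The paper argues by contradiction: it assumes the optimum is at some $u<p$, constructs a perturbed schedule $\bar\s$ with $\bar u=u+\varepsilon\Delta_2$ and $\bar v=v-\varepsilon\Delta_1$, and uses the concavity of $t\mapsto t^\alpha$ to show that $\bar\s$ does strictly more work on both $C_1$ and $B$ during an interval of length $M_u$, contradicting Lemma~\ref{lem:muopt}. You instead reparametrize by $\beta$, recognize that both phases of $\s_u$ are themselves PM schedules on $2p$ processors, write down a closed form $M_u=F(\beta)/(2p)^\alpha$, and run a direct one-variable convexity argument. This is more explicit and arguably cleaner; the identification of the unconstrained minimizer $\beta^\star=L_{c_1}\LG{B}/\LG{C_1}$ and the use of $x>1$ to place it left of the feasible window are correct and elegant.

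There is, however, a real gap in the ``feasibility'' step. The schedule $\s_u$ is defined for \emph{every} $u\in\,]0,p]$, not only for those with $\beta\le\LG{B}$. When $L_{c_1}(2p-u)^\alpha/u^\alpha>\LG{B}$, the graph $B$ simply completes before $c_1$ in phase~2, $\bar B_u$ is empty, and some processors go idle---but such $\s_u$ are still legitimate competitors that the lemma must beat. Your bijection $u\mapsto\beta$ and your formula $F(\beta)$ both break down in this regime (indeed $(\LG{B}-\beta)^{1/\alpha}$ is undefined). Worse, if $L_{c_1}>\LG{B}$---which is perfectly compatible with $x>1$ when $c_1$ is long and $\LG{C_1\setminus c_1}$ is small---the interval $[L_{c_1},\LG{B}]$ is empty and your argument covers no $u$ at all. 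The fix is short: when $\bar B_u=\emptyset$ one has $M_u=\LG{C_1\setminus c_1}/(2p)^\alpha+L_{c_1}/u^\alpha$, which is strictly decreasing in $u$, so these $u$ are never optimal and the minimum over $]0,p]$ is still at $u=p$. You should add this case explicitly rather than dismissing it as infeasible.
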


\begin{proof}
Let $\displaystyle \uopt=\argmin_{w\in]0,p]}\left(M_w\right)$. We will prove here that $\uopt=p$.

For the sake of simplification, we denote in this proof $\uopt$ by $u$, $v_{u}$ by $v$, $\Delta_{u,1}$ by $\Delta_{1}$ and $\Delta_{u,2}$ by $\Delta_{2}$. We will then consider the schedule $\s_u$, which is makespan-optimal among the $\s_w$, for $w\in]0,p]$.

Suppose by contradiction that $u<p$. We will build an other schedule $\bar{\s}$ following the constraint $\R_{\bar{u}}$ for a certain $\bar{u}$ respecting $u<\bar{u}<p$, that will give a contradiction with the optimality of $\s_u$. 

\newparskip

The following paragraphs prove the inequality $v>p$, which can be intuitively deduced by an observation of the schedules.

As we have $x> 1$, we know that $\LG{C_1\setminus c_1} > \LG{\bar B_{xp}} = \LG{B} - \LG{B_{xp}} > \LG{B}-c_1$. The first inequality holds because in $\s_{xp}$, the subgraphs $C_1\setminus c_1$ and $\bar B_{xp}$ are scheduled in parallel, and each subgraph is scheduled according to the PM ratios. Then, each subgraph {\color{red} can be replaced by its equivalent task}.   Moreover, $xp$ (resp. $(2-x)p$) processors are allocated to  $C_1\setminus c_1$ (resp. $\bar B_{xp}$). As $x>1$, more processors are allocated to $C_1\setminus c_1$, so $\LG{C_1\setminus c_1} > \LG{\bar B_{xp}}$. 
By the same reasoning between $c_1$ and $B_{xp}$ in $\s_{xp}$, we get $\LG{B_{xp}}<c_1$ and the second inequality holds. See Fig.~\ref{fig:hybsu} for an illustration.

With similar arguments between the subgraphs $c_1$ and $B_u$ in the schedule $\s_u$, and using the hypothesis $u<p$, we get $\LG{B_u} > c_1$. The difference with the previous case is that the share of processors allocated to both subgraphs is not computed by the PM ratios, but as $B_u$ is scheduled under $(2p-u)$ processors with the PM ratios, {\color{red} it can still be replaced by its equivalent task}.

Combining these two inequalities, we have $\LG{\bar B_u} < \LG{B}-c_1 < \LG{C_1\setminus c_1}$, and by using the same reasoning in the other way with the parallel execution of $C_1\setminus c_1$ and $\bar B_u$ in $\s_u$, we finally prove $v>p$.

\begin{figure}[!ht]
\centering
\newcommand{\midy}{60pt}
\newcommand{\topy}{90pt}
\newcommand{\midx}{60pt}
\newcommand{\midlx}{30pt}
\newcommand{\topx}{91pt}

\newcommand{\bota}{70pt}
\newcommand{\leftb}{75pt}
\newcommand{\scalerat}{0.8}
{\begin{tikzpicture}[scale =\scalerat,
every node/.style={inner sep=0pt, minimum size=0pt, anchor=center, transform shape},
label/.style={}
]

\node (bl) at (0,0) {};
\node (tl) at (0,\topy) {};
\node (br) at (\topx,0) {};
\node (tr) at (\topx,\topy) {};
\node (ml) at (0,\midy) {};
\node (mm) at (\midlx,\midy) {};
\node (mr) at (\midx,\midy) {};
\node (bml) at (\midlx,0) {};
\node (bm) at (\midx,0) {};
\node (tm) at (\midx,\topy) {};
\draw (bl)--(tl)--(tr)--(br)-- (bl) --cycle ;
\draw[gray,dashed] (\topx/2,0) -- ++(0,\topy);
\draw(\midlx,\topy) -- (\midlx,\bota);
\draw (0,\bota)-- ++(\leftb,0)
      (\leftb,0) -- ++(0,\bota);

\path (tl)--+(10pt,-10pt) node[label] {$c_1$};
\path (0,\bota)--+(20pt,-10pt) node[label] {$C_1\setminus c_1$};
\path (tm)--+(10pt,-10pt) node[label] {$B$};

\draw [gray,decorate,decoration={brace,amplitude=8pt},yshift=-5pt,xshift=0pt]
(\topx/2,0)--(0,0) node [midway,yshift=-0.6cm] 
{\footnotesize $p$};
\draw [gray,decorate,decoration={brace,amplitude=8pt},yshift=-5pt,xshift=0pt]
(\topx,0)--(\topx/2,0) node [midway,yshift=-0.6cm] 
{\footnotesize $p$};
\draw [gray,<->, yshift=4pt]
(0,\topy)--(\midlx,\topy) node [midway,yshift=0.3cm]
{\footnotesize $u$};
\draw [gray,<->]
(0,15pt)--(\leftb,15pt) node [midway,yshift=0.3cm] 
{\footnotesize $v$};

\draw [gray,decorate,decoration={brace,amplitude=5pt},xshift=-2pt,yshift=0pt]
(0,\bota) -- (0,\topy) node [midway,left,xshift=-0.3cm] 
{\footnotesize $\Delta_{1}$};
\draw [gray,decorate,decoration={brace,amplitude=5pt},xshift=-2pt,yshift=0pt]
(0,0) -- (0,\bota) node [midway,left,xshift=-0.3cm] 
{\footnotesize $\Delta_{2}$};

\phantom{
\draw [gray,decorate,decoration={brace,amplitude=8pt},yshift=-5pt,xshift=0pt]
(\topx,-10pt)--(\topx/2,-10pt) node [midway,yshift=-0.6cm] 
{\footnotesize $p$};
}
\end{tikzpicture}}
\hfill
\renewcommand{\midlx}{35pt}
\renewcommand{\leftb}{65pt}
\renewcommand{\bota}{70pt}
\renewcommand{\topy}{100pt}
{\begin{tikzpicture}[scale =\scalerat,
every node/.style={inner sep=0pt, minimum size=0pt, anchor=center, transform shape},
label/.style={}
]

\node (bl) at (0,0) {};
\node (tl) at (0,\topy) {};
\node (br) at (\topx,0) {};
\node (tr) at (\topx,\topy) {};
\node (ml) at (0,\midy) {};
\node (mm) at (\midlx,\midy) {};
\node (mr) at (\midx,\midy) {};
\node (bml) at (\midlx,0) {};
\node (bm) at (\midx,0) {};
\node (tm) at (\midx,\topy) {};
\draw (bl)--(tl)--(tr)--(br)-- (bl) --cycle ;
\draw[gray,dashed] (\topx/2,0) -- ++(0,\topy);
\draw(\midlx,\topy) -- (\midlx,\bota);
\draw (\midlx,\bota)-- (\leftb,\bota)
      (\leftb,0) -- ++(0,\bota);

\draw(0,\bota+10pt)-- ++(\midlx,0);

\path (tl)--+(10pt,-10pt) node[label] {$c_1$};
\path (0,\bota)--+(20pt,-10pt) node[label] {$C_1\setminus c_1$};
\path (tm)--+(10pt,-10pt) node[label] {$B$};

\draw [gray,decorate,decoration={brace,amplitude=8pt},yshift=-5pt,xshift=0pt]
(\topx/2,0)--(0,0) node [midway,yshift=-0.6cm] 
{\footnotesize $p$};
\draw [gray,decorate,decoration={brace,amplitude=8pt},yshift=-5pt,xshift=0pt]
(\topx,0)--(\topx/2,0) node [midway,yshift=-0.6cm] 
{\footnotesize $p$};
\draw [gray,<->, yshift=4pt]
(0,\topy)--(\midlx,\topy) node [midway,yshift=0.3cm]
{\footnotesize $\bar u=u+\varepsilon\Delta_{2}$};
\draw [gray,<->]
(0,15pt)--(\leftb,15pt) node [midway,yshift=0.3cm,fill=white] 
{\footnotesize $\bar v=v-\varepsilon\Delta_{1}$};

\draw [gray,decorate,decoration={brace,amplitude=5pt},xshift=-2pt,yshift=0pt]
(0,\bota) -- (0,\topy) node [midway,left, xshift=-0.3cm] 
{\footnotesize $\Delta_{1}$};
\draw [gray,decorate,decoration={brace,amplitude=5pt},xshift=-2pt,yshift=0pt]
(0,0) -- (0,\bota) node [midway,left,xshift=-0.3cm] 
{\footnotesize $\bar M - \Delta_{1}$};
\draw [gray,decorate,decoration={brace,amplitude=5pt},xshift=2pt,yshift=0pt]
(\topx,\topy) -- (\topx,10pt) node [midway,right, xshift=0.3cm,align=left] 
{\footnotesize $\bar\Delta$ of\\ length\\ $M_u$};

\draw[pattern=north west lines, pattern color=gray] (\leftb,0) rectangle (\topx,18pt);

\end{tikzpicture}}
\caption{Schedules $\s_u$ (left) and  $\bar\s$(right), assuming that $B$ begins after $C_1$ in $\bar\s$}
\label{fig:hybsb}
\end{figure}

\newparskip

Let $\varepsilon>0$ small enough such that $u+\varepsilon\Delta_{2}<p$ and $v-\varepsilon\Delta_{1}>p$. Let $\bar{u}=u+\varepsilon\Delta_{2}$ and $\bar{v}=v-\varepsilon\Delta_{1}$. One can note that $0<u<\bar{u}<p<\bar{v}<v$.

Let $\bar{\s}$ be the schedule allocating $\bar{u}$ processors to $C_1$ during a time $\Delta_{1}$ at the end of the schedule, and $\bar{v}$ processors to $C_1$ before. 
The subgraph $B$ is scheduled following PM ratios in parallel to $C_1$, in such a way that it terminates at the same time as $c_1$ and there is no idle time after the beginning of its execution. The subgraph $C_1$ is scheduled in the same way as $B$, following PM ratios as soon as its execution begins. 
 One can note that $B$ and $C_1$ do not necessarily begin simultaneously. See Fig. \ref{fig:hybsb} for an illustration of the case where $B$ begins after $C_1$. Let $\bar M$ be the makespan of $\bar{\s}$.

As $\bar u>u$, $c_1$ is completed in a time smaller than $\Delta_1$ in $\bar\s$, so $\bar \s$ respects the constraint $\R_{\bar u}$.
Then, by Lemma \ref{lem:muopt}, as $\bar\s \neq \s_{\bar u}$, we know that $\bar M> M_{\bar u}$. In addition, by the definition of $M_u$, we get $\bar M> M_{\bar u}\geq M_u$.

We can assume without loss of generality that $C_1$ and $B$ are both a unique task. This can be reached by replacing them by their equivalent task, which does not change their execution time. We do not lose generality by this transformation here because both subgraphs are scheduled under the PM rules. For example, we could not consider the equivalent task of the total graph $G$ because constraints of the form $\R_w$, with $w\neq xp$,  are followed, and so the PM rules are not respected.

Let $\bar\Delta$ be the interval of time ending when $\s_u$ terminates and having a length of $M_u$. See Fig. \ref{fig:hybsb} for an illustration.

Let $\bar W_C$ (resp. $\bar W_B$) be the total length of the task $C_1$ (resp. $B$) that is executed in $\bar{\s}$ during $\bar\Delta$. Similarly, we define $W_C$ (resp. $W_B$) the total length of the task $C_1$ (resp. $B$) that is executed in $\s_u$. These two last quantities are equal to:
\begin{align*}
 W_C &= \Delta_{1}u^\alpha+\Delta_{2}v^\alpha \left( = \LG{C_1}\right)\\
W_B &= \Delta_{1}(2p-u)^\alpha+\Delta_{2}(2p-v)^\alpha \left( = \LG{B}\right)
\end{align*}

As $\bar M> M_u$, we must not have simultaneously $\bar W_C\geq W_C$ and $\bar W_B\geq W_B$. Indeed, if it was the case, then all the tasks of $G$ would be completed by $\bar\s$ in a makespan smaller than $M_u$, which is a contradiction.
For each task, we separate two cases.

If $C_1$ begins in $\bar\s$ during $\bar\Delta$, then $\bar W_C=\LG{C}=W_C$ because the execution of $C_1$ would hold entirely in $\bar\Delta$.

Otherwise, we have:
$$\bar W_C = \Delta_{1}\bar{u}^\alpha+\Delta_{2}\bar{v}^\alpha$$

We know that $0<u<\bar{u}<\bar{v}<v$. 
Therefore, by the concavity of the function $t\mapsto t^\alpha$, we have the inequality below. The plot on the right illustrates the inequality. Indeed, the slope of the red segment corresponding to $u$ and $\bar u$ is larger than the other one.

\vspace{.25em}

\begin{minipage}{.36\columnwidth}
\begin{align*}
\frac{\bar{u}^\alpha-u^\alpha}{\bar u - u} &> \frac{v^\alpha-\bar{v}^\alpha}{ v -\bar v}
\end{align*}
\end{minipage}
%\hfill
\begin{minipage}{.58\columnwidth}
%\hfill
\tikzset{
xmin/.store in=\xmin, xmin/.default=-3, xmin=-3,
xmax/.store in=\xmax, xmax/.default=3, xmax=3,
ymin/.store in=\ymin, ymin/.default=-3, ymin=-3,
ymax/.store in=\ymax, ymax/.default=3, ymax=3,
}
% Commande qui trace la grille entre (xmin,ymin) et (xmax,ymax)
\newcommand {\grille}
{\draw[help lines, dashed] (.001,.001) grid (1,1);}
% Commande \axes
\newcommand {\axes} {
\draw[->] (\xmin,0) -- (\xmax,0);
\draw[->] (0,\ymin) -- (0,\ymax);
}
% Commande qui limite l’affichage à (xmin,ymin) et (xmax,ymax)
\newcommand {\fenetre}
{\clip (\xmin,\ymin) rectangle (\xmax,\ymax);}
\newcommand{\fx}[1]{3*#1^3}
\newcommand{\ux}{.4}
\newcommand{\ubx}{.53}
\newcommand{\vbx}{.85}
\newcommand{\vx}{.95}
%\begin{figure}[H]
\begin{tikzpicture}[scale=1.3,xmin=0,xmax=3.3,ymin=0,ymax=1]
%\grille 
\axes %\fenetre
\draw[blue] plot[smooth,domain=0:1] (\fx{\x},\x);
\draw (0,\ymax) node[left] {\textit{ \small $t^\alpha$}} (\xmax,0) node[below] {\textit{\small $t$}};
\node[red,below] at (\fx{\ux},0) {$u\strut$};
\node[red,below] at (\fx{\ubx},0) {$\bar u\strut$};
\node[red,below] at (\fx{\vbx},0) {$\bar v\strut$};
\node[red,below] at (\fx{\vx},0) {$v\strut$};

\draw[red,very thick,|-|] (\fx{\ux},\ux) edge (\fx{\ubx},\ubx)
 (\fx{\vx},\vx) -- (\fx{\vbx},\vbx);

\foreach \aa in {\ux,\ubx,\vx,\vbx}
{
\draw[red, dashed]  (\fx{\aa},0) -- +(0,\aa);
}
\end{tikzpicture}
%\caption{Scheme of the function $f:t\mapsto t^\alpha$, which illustrates the concavity inequality}
%\end{figure}

\end{minipage}
\vspace{.5em}

Then, we derive from this inequality the fact that $\bar W_C$ is larger than $W_C$:
\begin{align*}
\frac{\bar{u}^\alpha-u^\alpha}{\bar u - u} &> \frac{v^\alpha-\bar{v}^\alpha}{ v -\bar v}\\
\frac{\bar{u}^\alpha-u^\alpha}{\varepsilon\Delta_{2}} &> \frac{v^\alpha-\bar{v}^\alpha}{\varepsilon\Delta_{1}}\\
{\Delta_{1}}\left(\bar{u}^\alpha-u^\alpha\right) &> \Delta_{2}\left(v^\alpha-\bar{v}^\alpha\right)\\
\Delta_{1}\bar{u}^\alpha+\Delta_{2}\bar{v}^\alpha &> \Delta_{1}u^\alpha+\Delta_{2}v^\alpha\\
\bar W_C &> W_C
\end{align*}

Therefore, in any case, we have $\bar W_C \geq W_C$

\newparskip

Then, we treat similarly the subgraph $B$.
If $B$ begins in $\bar\s$ during $\bar\Delta$, then $\bar W_B=\LG{B}=W_B$.

Otherwise, we have:

$$\bar W_B = \Delta_{1}(2p-\bar{u})^\alpha+\Delta_{2}(2p-\bar{v})^\alpha$$

Similarly, we know that $2p-u>2p-\bar{u}>2p-\bar{v}>2p-v>0$. Therefore, by the concavity of the function $t\mapsto t^\alpha$, we have:
\begin{align*}
\frac{(2p-{u})^\alpha-(2p-\bar u)^\alpha}{\bar u - u} &< \frac{(2p-\bar v)^\alpha-(2p-{v})^\alpha}{v - \bar v}\\
\frac{(2p-{u})^\alpha-(2p-\bar u)^\alpha}{\varepsilon\Delta_{2}} &< \frac{(2p-\bar v)^\alpha-(2p-{v})^\alpha}{\varepsilon\Delta_{1}}\\
\frac{(2p-\bar{u})^\alpha-(2p-u)^\alpha}{\varepsilon\Delta_{2}} &> \frac{(2p-v)^\alpha-(2p-\bar{v})^\alpha}{\varepsilon\Delta_{1}}\\
{\Delta_{1}}\left((2p-\bar{u})^\alpha-(2p-u)^\alpha\right) &> \Delta_{2}\left((2p-v)^\alpha-(2p-\bar{v})^\alpha\right)\\
\Delta_{1}(2p-\bar{u})^\alpha+\Delta_{2}(2p-\bar{v})^\alpha &> \Delta_{1}(2p-u)^\alpha+\Delta_{2}(2p-v)^\alpha\\
\bar W_B &> W_B
\end{align*}

Then, in any case, we have both $\bar W_C\geq W_C$ and $\bar W_B\geq W_B$, so we get the contradiction.

Therefore, we have $u\geq p$ and so $u=p$.\qed
\end{proof}

\begin{lemma}
\label{lem:slb}

The makespan of $\s_p$ is a lower bound of $\mopt$.

\end{lemma}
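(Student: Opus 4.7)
The plan is to convert \sopt into a schedule respecting the weaker constraint $\R_u$ for some $u\in(0,p]$ without increasing the makespan, and then to invoke Lemmas~\ref{lem:muopt} and~\ref{lem:uopt}.

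First, I would argue that without loss of generality $c_1$ is processed last in \sopt, occupying a final interval $[\mopt-\tau,\mopt]$. Indeed, $C_1\setminus c_1$ must precede $c_1$ in the DAG while $B$ is independent of $c_1$, so any $B$-work performed after $c_1$'s completion in \sopt can be shifted earlier (reusing the rearrangement from the proof of Lemma~\ref{lem:muopt}). Because \R forces $c_1$ onto a single node, the time-varying share $p_{c_1}(t)$ allocated to $c_1$ never exceeds $p$, and $L_{c_1}=\int_{\mopt-\tau}^{\mopt}p_{c_1}(t)^\alpha\,dt$.

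Next, I would set $u=(L_{c_1}/\tau)^{1/\alpha}$, which belongs to $(0,p]$ since $L_{c_1}\le p^\alpha\tau$. Replacing $p_{c_1}(t)$ by the constant share $u$ still completes $c_1$ at time \mopt, and the share available for $B$ on the final interval becomes the constant $2p-u$. The crucial technical step, and the main obstacle, is checking that $B$ can be rescheduled without overrunning \mopt. By Jensen applied to the concave map $x\mapsto x^\alpha$, one obtains $u\le\bar p_{c_1}$, where $\bar p_{c_1}$ is the time-average of $p_{c_1}$; by Jensen applied to the concave map $x\mapsto(2p-x)^\alpha$, one obtains
\[
(2p-u)^\alpha\tau\;\ge\;(2p-\bar p_{c_1})^\alpha\tau\;\ge\;\int_{\mopt-\tau}^{\mopt}(2p-p_{c_1}(t))^\alpha\,dt,
\]
so the $B$-work achievable in parallel with $c_1$ under the new allocation dominates the $B$-work accomplished during that interval by \sopt. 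On the earlier portion $[0,\mopt-\tau]$ I would keep the original processing of $C_1\setminus c_1$ and only a subset of the $B$-tasks from \sopt; this is feasible because strictly fewer $B$-tasks remain to be scheduled there.

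The resulting schedule respects $\R_u$ and has makespan at most \mopt. Lemma~\ref{lem:muopt} then gives $M_u\le\mopt$, and Lemma~\ref{lem:uopt} gives $M_p\le M_u$, yielding $M_p\le\mopt$ as claimed. The delicate point is the $B$-bookkeeping: one has to make rigorous that the extra $B$-work absorbed into the final interval really does reduce the earlier load, and that DAG precedences inside $B$ are preserved when we switch to a PM-style schedule on $[\mopt-\tau,\mopt]$. The two Jensen/concavity comparisons (on $x^\alpha$ and on $(2p-x)^\alpha$) are the main computational ingredients, and they mirror the concavity arguments already deployed in Lemmas~\ref{lem:esc} and~\ref{lem:rate}.
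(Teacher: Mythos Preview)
Your argument is correct, but it takes a noticeably different and more laborious route than the paper.

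The paper's proof stays \emph{inside} the constraint \R. It observes that $c_1$ lives on a single node with $p$ cores, so if its share $p_{c_1}(t)$ were non-constant one could replace it by its time-average $\bar p_{c_1}$ \emph{on that node only}. By concavity this strictly helps $c_1$ and (since the remaining node-1 tasks now see the constant profile $p-\bar p_{c_1}$) also helps everything else on that node; node~2 is untouched. Hence $\sopt$ itself already satisfies $\R_{u_*}$ for $u_*=\bar p_{c_1}\le p$, and one immediately gets $\mopt\ge\min_{w\le p}M_w=M_p$ via Lemmas~\ref{lem:muopt} and~\ref{lem:uopt}. No rescheduling of $B$, no second Jensen inequality on $(2p-x)^\alpha$, no equivalent-task bookkeeping is needed.

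Your approach instead drops \R and builds a fresh schedule on the merged $2p$-processor pool. Because you compare against $2p-p_{c_1}(t)$ rather than $p-p_{c_1}(t)$, you need the chain $u\le\bar p_{c_1}$ (first Jensen) followed by $(2p-u)^\alpha\tau\ge\int(2p-p_{c_1})^\alpha$ (second Jensen), and then the delicate step of arguing that a PM schedule of $B$ under the new profile still completes (using that any valid schedule of $B$ under profile $q_B(t)$ certifies $\int q_B^\alpha\ge\LG{B}$). This is all sound, and you correctly flag the precedence/bookkeeping issue, but it is considerably heavier than the paper's one-node averaging. The payoff of your version is that it is more explicitly constructive; the payoff of the paper's version is that it sidesteps the $B$-reallocation entirely by never leaving the two-node constraint.
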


\begin{proof}
One can note that in $\sopt$, a constant share $u_*\leq p$ processors must be allocated to $c_1$ due to \R, as in $\s_{u_*}$. Indeed, if this share is not constant, because of the concavity of the function $t\mapsto t^\alpha$, it would be better to always allocate the mean value to $c_1$. This would allow to terminate earlier both $c_1$ and the tasks executed in parallel to $c_1$ on the same part. {\color{red} See Lemma ???.}
 
Let $\R'$ be the constraint that enforces a schedule to respect one of the constraint $\R_w$, for any $w\in]0,p]$. Otherwise stated, $\R'$ enforces a schedule to allocate a constant share $w$ not larger than $p$ to $c_1$.

Let $\sopt'$ be the makespan-optimal schedule respecting $\R'$, and let $\mopt'$ be its makespan. As $\sopt$ respects $\R'$, we have $\mopt\geq \mopt'$.

Furthermore, there exists $u'\leq p$ such that $\sopt'=\s_{u'}$. Therefore, $\mopt'\geq \min_{w\in]0,p]}M_{w}$, and, by Lemma \ref{lem:uopt}, we get $\mopt'\geq M_{p}$.

Finally, we have $\mopt\geq M_p$.\qed
\end{proof}

%%%%%%%%%%%%%%%%%%%%%%%%%%%%%%%%%%%%%%
% THEOREM
%%%%%%%%%%%%%%%%%%%%%%%%%%%%%%%%%%%%%%

We prove the following theorem by induction on $n$.  We are now able
to prove Theorem~\ref{th.43approx}, by induction on the tree
structure. The corresponding approximation algorithm is described in
Algorithm~\ref{alg:hybapp}.

\begin{proof}[of Theorem~\ref{th.43approx}]
First, we treat the initialization case, where $n=1$. An optimal schedule is the one that allocates $p$ processors to the unique task.

Then, we treat the cases that do not need the heredity property.
\begin{itemize}
\item if $x\geq 1$ and $c_1$ is a leaf, no schedule can have a makespan smaller than $x^\alpha \mpm_{2p}$, as no more than $p$ processors can be allocated to $c_1$. Then, the schedule that only differs from \spm by reducing the share allocated to $c_1$ to $p$ is makespan-optimal.
\item if $x\leq 1$, the result is given by Lemma \ref{lem:x<1}. 
\end{itemize}

Now, we suppose the result true for $m<n$. The case remaining is when $c_1$ is not a leaf and $x>1$. Consider such a graph $G$ of $n$ nodes.

We consider the schedule $\s_p$, whose makespan $M_p$ is a lower bound of \mopt as stated in Lemma \ref{lem:slb}.

We now build the schedule $\s$, which achieves a $\frtrd$-approximation respecting $\R$.
At the end of the schedule, $G_{p,1}$ is scheduled as in $\s_p$. At the beginning of the schedule, we use the heredity property to derive from $\s_p$ a schedule of $G_{p,2}$ that follows the \R constraint.

More formally, we have $G_{p,2}$, which is the parallel composition $\para{(C_1\setminus c_1)}{\bar B_p}$, composed of at most $n-1$ nodes. So, by induction, a schedule $\s^r$ achieving a \frtrd-approximation can be computed for $G_{p,2}$. This means that its makespan $M^r$ is at most $\frtrd \Delta_{p,2}$, as $\s_p$ completes $G_{p,2}$ with PM ratios in a time $\Delta_{p,2}$, which is then the optimal time.

Consider the schedule $\s$ of $G$ that schedules $G_{p,2}$ as in $\s^r$, then schedules $G_{p,1}$ as in $\s_p$. The time necessary to complete $G_{p,1}$ is then equal to $\Delta_{p,1}$.
The makespan $M$ of $\s$ respects then:
 \begin{align*}
 M &= \Delta_{p,1} + M^r \leq \Delta_{p,1} + \frtrd \Delta_{p,2} \\ 
 &\leq \frtrd \left(\Delta_{p,1} + \Delta_{p,2}\right)\leq \frtrd M_{p} \leq \frtrd \mopt
 \end{align*}
 
  Then, $\s$ is a $\frtrd$-approximation.\qed
\end{proof}

\begin{figure}[Ht]
\begin{algorithmic}[1]
 \Require{A graph $G$, the parameter $p$ of the processor platform $\mathcal{P}$}
 \Ensure{A schedule $\s$ of $G$ on $\mathcal{P}$ that is a \frtrd-approximation of the makespan}
\Function{HybApp}{$G$,$p$}
\State $\tilde{G} \gets G$
\State Modify $G$ as in Lemma \ref{lem:chainroot} 
\State Compute the PM schedule $\spm$ of $G$ on $2p$ processors
\State Compute the $c_i$'s, the $C_i$'s, $B$, and $x$
 \If{$x\geq 1$ and $c_1$ is a leaf}
 \State	Build $\s$: shrink from $\spm$ the share of processors allocated to $c_1$ to $p$ processors
 	
 \ElsIf{$x\leq1$}
 \State	Build $\s$: map the $C_i$'s as in Lemma \ref{lem:x<1}, and compute the PM schedule on each part
 	
 \Else
 \Comment{we have $x>1$ and $c_1$ is not a leaf}
\State Compute the schedule $\s_p$ and partition $G$ in $G_{p,1}$ and $G_{p,2}$ as in Definition \ref{def:su}
\State $\s^r \gets\ \Call{HybApp}{G_{p,2},p}$
\State Build $\s$: schedule $G_{p,2}$ as in $\s^r$ then $G_{p,1}$ as in $\s_p$
 \EndIf
 \State Adapt the schedule $\s$ to the original graph $\tilde{G}$ if $G\neq\tilde G$
 \State \Return $\s$
 \EndFunction
\end{algorithmic}
 \caption{Approximation algorithm for the hybrid problem}
 \label{alg:hybapp}
\end{figure}
%
%\begin{figure}[!ht]
%
%
%\SetEndCharOfAlgoLine{}
%
%\begin{algorithm}[H]
%
%\SetKwFunction{hybapp}{HybApp}
%\SetKwBlock{Fct}{Function \hybapp{$G$,$p$}}{}
%\SetKwInOut{KwIn}{Input}
%\SetKwInOut{KwOut}{Output}
%\SetKwIF{AFct}{AElseIf}{AElse}{if}{then}{else if}{else}{endif}
%\SetVline
%\Fct{
% \KwIn{A graph $G$, the parameter $p$ of the processor platform $\mathcal{P}$}
% \KwOut{A schedule $\s$ of $G$ on $\mathcal{P}$ that is a \frtrd-approximation of the makespan}
% $\tilde{G} \gets G$\;
% Modify $G$ as in Lemma \ref{lem:chainroot} \;
% Compute the PM schedule $\spm$ of $G$ on $2p$ processors\;
% Compute the $c_i$'s, the $C_i$'s, $B$, and $x$\;
% \uIf{$x\geq 1$ and $c_1$ is a leaf}{
% 	Build $\s$: shrink from $\spm$ the share of processors allocated to $c_1$ to $p$ processors\;
% 	}
% \uElseIf{$x\leq1$}{
% 	Build $\s$: map the $C_i$'s as in Lemma \ref{lem:x<1}, and compute the PM schedule on each part\;
% 	}
% \Else(\tcp*[f]{we have $x>1$ and $c_1$ is not a leaf}){
% Compute the schedule $\s_p$ and partition $G$ in $G_{p,1}$ and $G_{p,2}$ as in Definition \ref{def:su}\;
% $\s^r \gets\ \hybapp{$G_{p,2},p$}$\;
% Build $\s$: schedule $G_{p,2}$ as in $\s^r$ then $G_{p,1}$ as in $\s_p$\;
% }
% Adapt the schedule $\s$ to the original graph $\tilde{G}$ if $G\neq\tilde G$\;
% \Return $\s$\;
% }
% \vspace*{-1em}
% \caption{Approximation algorithm for the hybrid problem \mystrut(1.5em)}
% \label{alg:hybapp}
%\end{algorithm}
%\end{figure}

\paragraph{Worst case of the algorithm}

Let $G$ be the parallel composition of $3$ identical subtrees $G_1$, $G_2$, $G_3$ (plus the root of length 0). Each subtree $G_i$ is composed of $3$ tasks: a root $T_i$ of length $\epsilon$, and $2$ leaves $T_{i,1}$ and $T_{i,2}$ each of length $L$.

The algorithm follows the second case, and place $2$ of these trees on one part. The makespan obtained is then
$$M_1 = \frac{L}{\left(\frac p4\right)^\alpha} + \frac{\epsilon}{\left(\frac p2\right)^\alpha}$$

Now, consider the schedule that places two roots on one part, but $3$ tasks $T_{i,j}$ on each part at the beginning. Its makespan is then
$$M_2 = \frac{L}{\left(\frac p3\right)^\alpha} + \frac{\epsilon}{\left(\frac p2\right)^\alpha}$$

When $\epsilon$ is close to $0$, we get
$$\frac{M_1}{M_2} = \frtrd$$

So the approximation ratio is tight. 
\fi % iflong

\subsection{Two Heterogeneous Multicore Nodes}
\label{sec:dist-het}

We suppose here that the computing platform is made of two processors
of different processing capabilities: the first one is made of $p$
cores, while the second one includes $q$ cores. We also assume that
the parameter $\alpha$ of the speedup function is the same on both
processors. As the problem gets more complicated, we concentrate here
on $n$ independent tasks,  of lengths $L_1, ..., L_n$. Thanks to the
homogenous case presented above, we already know that scheduling
independent tasks on two nodes is NP-complete.

\iflong \else %%%%%%%%%%%%%%%%% Version IPDPS
This problem is close to the \textsc{Subset Sum} problem. Given $n$
numbers, the optimization version of \textsc{Subset Sum} considers a
target $K$ and aims at finding the subset with maximal sum smaller
than or equal to $K$.  There exists many approximation schemes for
this problem. In particular, Kellerer et
al.~\cite{kellerer2003efficient} propose a fully polynomial
approximation scheme (FPTAS). Based on this result, an approximation
scheme can be derived for our problem.
% This problem is close to the \textsc{Subset Sum} problem which given
% $n$ numbers, consists in finding a subset of them which sums to
% zero. The optimization version of \textsc{Subset Sum} also considers a
% target $K$ and aims at finding the subset with maximal sum smaller or
% equal to $K$.  There exists many approximation schemes for this
% problem. In particular, Kellerer et al.~\cite{kellerer2003efficient}
% propose a fully polynomial approximation scheme (FPTAS). Based on this
% result, an approximation scheme can be derived for our problem.
\begin{theorem}
  There exists an FPTAS for the problem of scheduling independent
  malleable tasks on two heterogeneous nodes, provided that, for each
  task, $L_i^{1/\alpha}$ is an integer.
\end{theorem}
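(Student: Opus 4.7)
I would start by translating the scheduling problem into a purely combinatorial one. By Theorem~\ref{th:step}, the optimal makespan of any set $A$ of independent tasks executed on a single node of $k$ cores equals $\bigl(\sum_{i\in A} L_i^{1/\alpha}\bigr)^{\alpha} / k^{\alpha}$. Setting $a_i = L_i^{1/\alpha}$, which are integers by hypothesis, and $S=\sum_i a_i$, the schedule assigning the tasks of $A$ to the first node and those of $\bar A$ to the second attains makespan $M(A)^{\alpha}$ with
\[
M(A) \;=\; \max\!\left(\frac{1}{p}\sum_{i\in A} a_i,\ \frac{1}{q}\sum_{i\in \bar A} a_i\right).
\]
Since $x\mapsto x^{\alpha}$ is non-decreasing and $(1+\epsilon)^{\alpha}\le 1+\epsilon$ for $0<\alpha\le 1$, any $(1+\epsilon)$-approximation of $\min_A M(A)$ also $(1+\epsilon)$-approximates the true makespan. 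Our task therefore reduces to approximately minimising $M(A)$ over subsets $A$ of $n$ integer items, i.e., to solve a two uniform machines makespan problem with lengths $a_i$ and speeds $p$ and $q$.

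Next, I would observe that the unconstrained minimiser of $M$ is the balancing point $\sum_{i\in A} a_i = K^{\star} := pS/(p+q)$, yielding the lower bound $M^{\star}\ge S/(p+q)$. Thus minimising $M$ amounts to finding a subset sum as close to $K^{\star}$ as possible. For this I would invoke the Kellerer--Pferschy--Pisinger subset-sum FPTAS~\cite{kellerer2003efficient}, which, for a target $K$ and precision $\epsilon'$, returns in time $\mathrm{poly}(n,1/\epsilon')$ a subset whose sum lies in $[(1-\epsilon')\sigma^{\star},\sigma^{\star}]$, where $\sigma^{\star}$ is the largest subset sum not exceeding $K$. Wrapping it inside a binary search on a guess $T$ for the makespan yields the desired algorithm: the predicate ``$M(A)\le T$ for some $A$'' is equivalent to the existence of a subset sum inside $\bigl[\max(0,S-qT),\; pT\bigr]$, which the FPTAS can decide approximately. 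Because $S/(p+q)\le M^{\star}\le S/\max(p,q)$, only polynomially many values of $T$ on a geometric grid of ratio $1+\epsilon$ need to be tested.

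The main obstacle lies in tuning the precision so that a $(1+\epsilon')$-accurate subset sum translates into a $(1+\epsilon)$-accurate $M(A)$: the additive error $\epsilon' K^{\star}$ on the chosen sum gets divided by either $p$ or $q$ on the makespan scale, so the worst-case amplification is $\max(p,q)/\min(p,q)$. Setting $\epsilon' = \Theta\bigl(\epsilon\,\min(p,q)/\max(p,q)\bigr)$ cancels this amplification. Since $p$ and $q$ are part of the input and therefore polynomially bounded in the instance bit length, this rescaling keeps the running time polynomial in $n$, in the input size and in $1/\epsilon$, preserving the FPTAS property. A final routine verification is that raising $M(A)$ to the power $\alpha$ does not degrade the approximation ratio, which follows from $0<\alpha\le 1$ together with $(1+\epsilon)^{\alpha}\le 1+\epsilon$.
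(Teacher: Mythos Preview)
Your reduction and the use of the Kellerer--Pferschy--Pisinger FPTAS are exactly what the paper does. The paper, however, dispenses with the binary search: it simply calls the Subset-Sum FPTAS twice, once with target $pS/(p+q)$ and once with target $qS/(p+q)$, and returns the better of the two induced schedules $\s_A$ and $\s_{\bar B}$. The analysis then shows directly that choosing the Subset-Sum precision $\epsilon_\kappa = \epsilon_\lambda/r$ with $r=\max(p/q,q/p)$ and $\epsilon_\lambda=\lambda^{1/\alpha}-1$ yields a $\lambda$-approximation. Your binary search over a geometric grid is not wrong, but it adds an extra $O(1/\epsilon)$ factor and some delicate interval-membership reasoning that the paper avoids entirely; both routes need the same precision rescaling by $r$.

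One claim in your write-up is not sound as stated: ``since $p$ and $q$ are part of the input and therefore polynomially bounded in the instance bit length.'' Numbers that are part of the input are \emph{not} polynomially bounded in their own encoding length; $p$ written in binary has value exponential in its bit size. The paper's running time is also polynomial in $r$, not in $\log r$, so the honest statement (and what the paper asserts) is an FPTAS whose complexity is polynomial in $n$, $1/\epsilon$, and $r$. Whether this counts as a genuine FPTAS depends on treating $p$ and $q$ as processor counts encoded in unary (a standard convention in scheduling), not on the argument you gave. You should replace that sentence by this convention, or simply state the dependence on $r$ explicitly as the paper does.
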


The proof is complex and detailed in~\cite{RR-ipdps-2014}. The
assumption on the $L_i^{1/\alpha}$s is needed to apply the FPTAS of
\textsc{Subset Sum}, which is valid only on integers.
\fi %%%%%%%%%%%%%%%%%%%% fin version IPDPS

\iflong %%%%%%%%%%%%%%%% version RR
\begin{proof}
We reduce the problem to {\sc Subset sum}, which is known to be NP-Complete \cite{gareyjohnson}.

Consider an instance $\mathcal{I}$ to {\sc Subset sum}. We have a set $X=\{x_i\}, i\in[1,n]$, a number $s$, and we want to know if there exists a subset of $X$ that sums to $s$, assuming that the total sum is larger.

We construct an instance $\mathcal{J}$ to $(p,q)$-scheduling. Let $T_i$ have lengths $L_i = x_i^{\alpha}$, $p=s$, $q=\sum x_i -p$, and $T=1$. We recall that raising a number to the power $\alpha$ is assumed feasible, so the computation of the $L_i$s is done in polynomial time. Note that $T$ is the optimal makespan without the $(p,q)$ constraint, and so only the PM schedule can be a solution.

It now remains to prove that $\mathcal{I}$ is satisfiable if and only if $\mathcal{J}$ is satisfiable.
Let $p_i$ be the share of processors allocated to task $T_i$ in the PM schedule. We have $$\displaystyle p_i = (p+q) \frac{L_i^{1/\alpha}}{ \sum_k L_k^{1/\alpha}} = x_i$$

Then, $\mathcal{J}$ is satisfiable if and only if a subset of the  $p_i$ sums to $s$. This is equivalent to say that a subset of $X$ sums to $s$, i.e., that $\mathcal{I}$ is satisfiable.\qed
\end{proof}

We will denote by $A$ the subset of the indices of the tasks allocated
to the $p$-part, and by $\bar A$ the complementary of $A$. Then, the
schedule that partition the tasks according to the subset $A$ and
performing a PM schedule on both parts is denoted by $\s_A$.

The $(p,q)$-scheduling problem is strictly equivalent to {\sc Subset sum} only when we restrict ourselves to instances where the PM schedule is compatible with the $(p,q)$-constraint. In the general case, the minimum-makespan schedule is reached when  the total idle time is minimized and the schedule is PFC.

This is equivalent to pack all the tasks in two sets $A$ and $B$ such that the difference between the time needed to complete $A$ with $p$ processors and the one needed to complete $\bar A$ with $q$ processors is minimized, i.e.:

$$\mathop{\mathrm{minimize}}\limits_{A} \left(\frac{\sum_{i\in A} L_i^{1/\alpha}}{p}\right)^\alpha - \left(\frac{\sum_{i\in \bar A} L_i^{1/\alpha}}{q}\right)^\alpha$$
$$ \mathop{\mathrm{minimize}}\limits_{A} \left(q^\alpha\left(\sum_{i\in A} x_i\right)^\alpha - p^\alpha\left(\sum_{i\in \bar A} x_i\right)^\alpha\right) \quad \textit{ where $x_i=L_i^{1/\alpha}$}$$

Note that the makespan of the schedule associated to the subset $A$ is:

$$M_A=\max\left(\frac{\saalp}{p^\alpha}, \frac{\sabalp}{q^\alpha}\right)$$

Consider an instance of $(p,q)$-scheduling. Let $S$ be $\sum_i x_i$, where $x_i=L_i^{1/\alpha}$ and $X$ be the set $\left\{x_i, i\in[1,n]\strut\right\}$.

\begin{lemma}
A tight lower bound of the optimal makespan is $ \mideal = \puisa{\frac{S}{p+q}}$. In such a schedule, we have $\sa = \frac{pS}{p+q}$. We denote this quantity by $\saideal$.
\end{lemma}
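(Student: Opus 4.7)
The statement bundles two claims: $\mideal$ is a lower bound on the optimal makespan, and this bound is tight (with the characterization $\sa=\saideal$ for any schedule that attains it). I would treat these in sequence.

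First, the lower bound. My plan is to relax the constraint $\R$ by pooling the two nodes into a single shared-memory platform of $p+q$ processors; any schedule feasible for the constrained problem is feasible for the pooled problem with the same makespan, so the pooled optimum lower-bounds the constrained optimum. The pooled instance is the parallel composition $G$ of the $n$ independent tasks. Unfolding Definition~\ref{def.eq-task} inductively, $\LG{G}=\left(\sum_i L_i^{1/\alpha}\right)^{\alpha}=S^{\alpha}$. Applying Theorem~\ref{th:step} with the constant processor profile $p(t)=p+q$, the pooled optimal makespan is $\LG{G}/(p+q)^\alpha=(S/(p+q))^\alpha=\mideal$.

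Second, tightness together with the characterization of $\sa$. Given any allocation $A\subseteq\{1,\dots,n\}$ to the $p$-node, the schedule $\s_A$ reduces to two independent shared-memory instances; by Theorem~\ref{th:step} applied on each side, its optimal makespan is
\[
M_A=\max\!\left(\frac{\saalp}{p^\alpha},\ \frac{\sabalp}{q^\alpha}\right).
\]
Suppose $M_A=\mideal=(S/(p+q))^\alpha$. Since $x\mapsto x^\alpha$ is increasing, both components must be bounded by $\mideal$, which rewrites as $\sa\le pS/(p+q)$ and $\sab\le qS/(p+q)$. Adding these two inequalities and using $\sa+\sab=S$, I get $S\le S$, forcing equality in both. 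Hence $\sa=pS/(p+q)=\saideal$. Conversely, whenever an $A$ with $\sa=\saideal$ happens to exist, the PM schedule inside each node allocates share $x_i(p+q)/S$ to task $T_i$; these shares sum to exactly $p$ on the $p$-node and $q$ on the $\bar A$ side, so $\s_A$ is feasible and realizes $\mideal$, proving the bound tight.

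The argument has essentially no obstacle: the lower bound is a one-line relaxation plus Theorem~\ref{th:step}, and the characterization comes from the $\max$ formula for $M_A$ combined with the trivial conservation $\sa+\sab=S$. The only subtlety worth spelling out is the equivalent-task computation $\LG{G}=S^\alpha$ for the parallel composition of independent tasks, which is obtained by induction on Definition~\ref{def.eq-task} (the $1/\alpha$-th power linearizes the recursion, so the length accumulates additively in the $x_i=L_i^{1/\alpha}$).
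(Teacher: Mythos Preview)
Your proof is correct and follows the same approach as the paper: the lower bound comes from relaxing the constraint \R and invoking the PM optimum on $p+q$ processors (Theorem~\ref{th:step}), and tightness is witnessed by instances admitting a partition with $\sa=\saideal$. Your write-up is simply more explicit than the paper's one-line proof---in particular, you spell out the derivation of $\sa=\saideal$ from the $\max$ formula for $M_A$, which the paper asserts without justification.
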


\begin{proof}
$\mideal$ is the makespan of the PM schedule, which is a lower bound of the optimal makespan, and can be reached, as in the proof of Theorem \ref{th:pqnpc}.\qed
\end{proof}

We note $r =\max\left(\frac qp,\frac pq\right)$.

For $0<\kappa<1$, a $\kappa$-approximation of {\sc Subset sum} returns a subset $A$ of $X$ such that the sum of its elements ranges between $\kappa OPT$ and $OPT$, where $$\displaystyle OPT = \max\limits_{A\  \mid\  \sum_A x_i \leq s} \sum_{A} x_i$$

For $\lambda>1$, a $\lambda$-approximation of the $(p,q)$-scheduling problem returns a schedule such that its makespan is not larger than $\lambda$ times the optimal makespan.
We note $\varepsilon_\lambda=\lambda^{1/\alpha}-1$ and $\varepsilon_\kappa = 1-\kappa$.

An AS \A resolving {\sc Subset Sum} is defined as followed. Given an instance $\mathcal I$ of {\sc Subset Sum} and a parameter $0<\kappa<1$, it computes a solution to $\mathcal I$ achieving a $\kappa$-approximation in a time complexity $f_\A(n,\ek)$. 

An AS \B resolving $(p,q)$-scheduling is defined as followed. Given an instance $\mathcal J$ of the $(p,q)$-scheduling problem and a parameter $\lambda>1$, it computes a solution to $\mathcal J$ achieving a $\lambda$-approximation in a time complexity $f_\B(\mathcal J,\el)$.

\newcommand{\mtgamrap}{\max\left(3,\left\lceil\frac{1}{\ek}-4\right\rceil\right)}

\begin{remark}
There exist a FPTAS for {\sc Subset Sum}.

%In [Marthello\&Toth'84] see [Przydatek'02], a PTAS is exhibited for {\sc Subset Sum}.  The approximation scheme $MT(x)$, for $x\geq3$, performs a $\left(1-\frac{1}{x+4}\right)$-approximation in time $O(n^x)$ and space $O(n)$.
%
%If we note $k=\mtgamrap$, the algorithm that given $\gamma$ and an instance $\mathcal{I}$ of {\sc Subset Sum} returns  the result of $MT\left(k\right)$ applied on $\mathcal{I}$ is a PTAS of time complexity $O\left(n^{k}\right)$ and space $O(n)$.

In [Kellerer et al'02] "An efficient FPTAS for the subset sum" a FPTAS of time complexity $O\left(\min\left(n/\ek,n+1/\ek^2\log(1/\ek)\right)\right)$ and space complexity $0(n+1/\ek)$ is proposed.

\end{remark}

\begin{defi}
\label{def:pqrest}
The \pqres problem is defined from the \pqsched problem by replacing the entries $L_i$ by $x_i = L_i^{1/\alpha}$, and is restricted to the case where the $x_i$ are integers.
\end{defi}

\begin{theorem}
\label{th:pqptas}
Given a AS \A of {\sc Subset Sum} of time complexity $(n,\ek) \mapsto f_\A(n,\ek)$, Algorithm \ref{alg:pqapp} performs a AS to \pqres with time complexity $(n,p,q,\alpha,\lambda) \mapsto O\left(f_\A\left(n,\frac{\el}{r}\right)\right)$.
\end{theorem}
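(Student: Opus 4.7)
My plan is to show that a single invocation of $\A$ (or two, by symmetry) with accuracy parameter $\ek = \el/\rat$, equivalently $\kappa = 1 - \el/\rat$, already yields a $\lambda$-approximation for \pqres. The natural target of the subset-sum query is the ideal $p$-node load $\saideal = pS/(p+q)$ identified in the excerpt; balancing the two nodes amounts to packing tasks on the $p$-node with total $x_i$-weight as close as possible to $\saideal$.

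I would begin by letting $A_O$ denote an optimal partition and, after possibly exchanging the roles of $p$ and $q$, assume without loss of generality that $\sao \leq \saideal$, so that the optimal makespan is dictated by the $q$-node: $\mopt = (\saob/q)^\alpha$. Running $\A$ on the instance $(X, \saideal)$ with parameter $\kappa = 1 - \el/\rat$ returns, in time $f_\A(n, \el/\rat)$, a subset $A$ satisfying $\sa \leq \saideal$ and $\sa \geq \kappa \cdot \opt \geq \kappa\, \sao$, since $A_O$ is itself a feasible candidate for the subset-sum instance with target $\saideal$. The $p$-side of the resulting schedule $\s_A$ therefore completes by $\mideal$, which is the makespan of the unconstrained PM schedule, hence a lower bound on $\mopt$; consequently the $p$-side is never critical for $\s_A$.

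The key computation is then the bound on the $q$-side:
\begin{align*}
\sab = S - \sa \leq \saob + (1-\kappa)\sao \leq \saob + \ek\,\saideal.
\end{align*}
Using $\saob \geq S - \saideal = q\,\saideal/p$, I get $\ek\,\saideal/\saob \leq \ek\,p/q \leq \ek\,\rat$, hence $\sab \leq \saob (1 + \ek\, \rat)$. With the choice $\ek = \el/\rat$ and the definition $\el = \lambda^{1/\alpha} - 1$, raising to the $\alpha$-th power yields
\begin{align*}
M_A = \left(\frac{\sab}{q}\right)^\alpha \leq \mopt\, (1 + \ek\,\rat)^\alpha = \mopt\, (1 + \el)^\alpha = \lambda\,\mopt,
\end{align*}
which is the desired approximation ratio.

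The main obstacle is the complementary case $\sao > \saideal$, in which the $p$-node rather than the $q$-node is critical; it is handled either by applying the same argument with $p$ and $q$ swapped, or equivalently by running $\A$ a second time with target $S - \saideal$ and keeping the better of the two resulting schedules, which only multiplies the complexity by a constant factor absorbed into $O(f_\A(n, \el/\rat))$. The remaining steps of Algorithm~\ref{alg:pqapp} (computing $S$, $\saideal$, and forming the two partitions) are polynomial in $n$ and negligible compared to $f_\A$. I would finally dispose of the degenerate case $\el \geq \rat$, where $\kappa \leq 0$ makes the FPTAS call vacuous: in that regime $\lambda$ is already so large that assigning every task to the larger-capacity node gives a $\lambda$-approximation trivially.
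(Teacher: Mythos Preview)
Your proposal is correct and follows essentially the same approach as the paper: both proofs set $\ek=\el/\rat$, assume by symmetry that $\sao\leq\saideal$, invoke $\A$ with target $\saideal$, observe that the $p$-side then finishes by $\mideal\leq\mopt$ so the $q$-side is critical, and bound $(M_A/\mopt)^{1/\alpha}$ by $1+\rat\,\ek=1+\el$; the complementary case is handled by a second call with target $S-\saideal$, and the degenerate case $\el\geq\rat$ (equivalently $\lambda\geq(1+\rat)^\alpha$) by scheduling everything on the larger node. The only cosmetic difference is that the paper simplifies the ratio $(S-\kappa\saopt)/(S-\saopt)$ by first replacing $\saopt$ with $\saideal$ via monotonicity, whereas you write it as $1+\ek\,\sao/\saob$ and bound $\sao/\saob\leq p/q\leq\rat$ directly; these are the same estimate.
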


\begin{coro}
The \pqres problem admits a AS of time complexity $O\left(\min\left(\frac{nr}{\lambda^{1/\alpha}-1},n+\left(\frac{r}{\lambda^{1/\alpha}-1}\right)^2\log(\frac{r}{\lambda^{1/\alpha}-1})\right)\right)$ and space complexity $0\left(n+\frac{r}{\lambda^{1/\alpha}-1}\right)$.

Indeed, parametrized by the FPTAS of [Kellerer'02], Algorithm \ref{alg:pqapp} is an AS of the \pqres problem of such a complexity.

%Using the PTAS derived from $MT$, the problem of $(p,q)$-scheduling is resolved by a PTAS of time complexity
%$O\left(n^{\max\left(3,\left\lceil\frac{r}{\el}-4\right\rceil\right)} \right)$ and space $O(r/\el)$. 
%
%Using the FPTAS of [Kellerer'02], the problem of $(p,q)$-scheduling is resolved by a FPTAS of time complexity
% $O\left(\min\left(nr/\el,n+(r/\el)^2\log(r/\el)\right)\right)$ and space complexity $0(n+r/\el)$
\end{coro}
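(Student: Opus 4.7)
The plan is to derive the corollary as a direct instantiation of Theorem~\ref{th:pqptas}: the theorem says that \emph{any} AS \A for {\sc Subset Sum} with running time $f_\A(n,\ek)$ yields, via Algorithm~\ref{alg:pqapp}, an AS for \pqres with running time $O\!\left(f_\A(n,\el/r)\right)$, where $\el=\lambda^{1/\alpha}-1$. Hence the corollary is obtained by plugging in the concrete FPTAS of Kellerer et al.\ for \A and performing the substitution $\ek \leftarrow \el/r$.

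Concretely, I would proceed as follows. First, recall from the Remark that Kellerer et al.'s FPTAS has time complexity $f_\A(n,\ek)=O\!\bigl(\min(n/\ek,\, n+(1/\ek^2)\log(1/\ek))\bigr)$ and space complexity $O(n+1/\ek)$. Second, set $\ek:=\el/r=(\lambda^{1/\alpha}-1)/r$, which is the precision required by the reduction used to prove Theorem~\ref{th:pqptas}. Third, substitute this value into the two expressions: the time bound becomes
\[
O\!\left(\min\!\left(\frac{nr}{\lambda^{1/\alpha}-1},\; n + \left(\frac{r}{\lambda^{1/\alpha}-1}\right)^{\!2}\log\!\left(\frac{r}{\lambda^{1/\alpha}-1}\right)\right)\right),
\]
and the space bound becomes $O\!\left(n + \frac{r}{\lambda^{1/\alpha}-1}\right)$, matching the statement exactly.

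The only non-cosmetic point to check is that the reduction of Theorem~\ref{th:pqptas} from \pqres to {\sc Subset Sum} does not inflate the cost beyond the stated $O(f_\A(n,\el/r))$. On the time side this is granted by the theorem itself. On the space side, which the theorem does not address explicitly, I would verify that Algorithm~\ref{alg:pqapp} only performs (i) $O(n)$ arithmetic on the input integers $x_i = L_i^{1/\alpha}$ to set up the {\sc Subset Sum} instance and (ii) a single call to \A, storing its output; thus the additional space on top of that of \A is linear in $n$, so the overall space remains $O(n+1/\ek)=O(n+r/(\lambda^{1/\alpha}-1))$. This bookkeeping is the only step that is not a plain substitution, but it is routine since no dynamic programming table of our own is introduced on top of \A's.
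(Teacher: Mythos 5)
Your proposal is correct and matches the paper's own (very terse) justification: the corollary is obtained exactly by instantiating Theorem~\ref{th:pqptas} with the Kellerer et al.\ FPTAS and substituting $\ek=\el/r$ into its time and space bounds, the two calls to \A in Algorithm~\ref{alg:pqapp} being absorbed into the constant of the $O(\cdot)$. Your extra remark on verifying the space bound is a point the paper glosses over, but it is routine and does not change the argument.
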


\begin{proof}

Let $\mathcal{I}$ be an instance of \pqres, $\lambda>1$ and \A be an AS of {\sc Subset Sum}.

We recall that raising a number to the power $\alpha$ or $1/\alpha$ is assumed feasible.

If $\lambda \geq (1+r)^{\alpha}$, it suffices to compute the PM schedule on the largest part of the platform.
We assume in the following that $\lambda < (1+r)^\alpha$.

We define $\kappa = \left(1-\frac{1}{r}(\lambda^{1/\alpha}-1)\right)$, so that $\ek=\frac{\el}{r}$. One can check that $0<\kappa<1$.

A tight lower bound on the makespan is $\mideal = \puisa{\frac{S}{p+q}}$. Indeed, it represents the makespan of the PM schedule on $p+q$ processors, which can respect the constraint for some values of the $x_i$s.

 Let \sopt be an optimal schedule of $\mathcal I$. If we denote by $\ao$ the subset of the tasks allocated to the $p$-part of the platform, we have either:

\begin{equation}
\label{eq:apq}
\sao \leq \saideal \leq \frac{p S}{p+q} = p \mideal^{1/\alpha} \quad\text{or}\quad 
\saob = S - \sao \leq \frac{q S}{p+q} = q \mideal^{1/\alpha}
\end{equation}

We first suppose that the left inequality holds. The other case is treated at the end of the proof.

Then,
$$\puisa{\frac{\sao}{p}} \leq \mideal$$

So the $p$-part of the schedule terminates before the ideal schedule. Therefore, the $q$-part of the schedule terminates after the $p$-part as $\mopt\geq\mideal$.
We denote $\saopt = \sao$.

Then, the makespan $\mopt$ is equal to the time needed to complete the tasks of $\bar \ao$. Then we have 
$$\mopt = \puisa{\frac{\saob}{q}}  = \puisa{\frac{S-\sao}{q}} $$

Let $\Lambda$ be the set of subsets f $X$:

$$\Lambda = \left\{ A\subset X \midset  (1-\ek) \saopt \leq \sa \leq \saopt \right\}$$

We now prove in the following paragraphs that a subset $A\in \Lambda$ is computed by the algorithm \A launched on $X$, $s=\saideal=\frac{pS}{p+q}$, and $\ek$. First, we recall that the $x_i$ are assumed to be integers in the formulation of \pqres;

We know that $\sao=\saopt$, so $A_O \in \Lambda$.
Then, there does not exist any subset $A$ of $X$ such that $\saopt < \sa \leq \saideal$, because the associated schedule $\s_A$ would have a makespan smaller than $\mopt$, which contradicts the optimality of $\sopt$.

So $A_O$ is an optimal solution to the instance submitted to $\A$.
Therefore, \A launched on this instance with the parameter $\gamma$ will return a set $A\in \Lambda$ in time $f_\A(n,\eg)$, and so the claim is proved.

\newparskip

Let $A$ be an element of $\Lambda$.
We know that the makespan $M_A$ of the corresponding schedule $\s_A$ allocating the tasks corresponding to $A$ on the $p$-part is:

$$M_A=\puisa{\max\left(\frac{\sa}{p}, \frac{\sab}{q}\right)}$$

We have

$$ \sab = S - \sa$$

and 

$$\sa \geq \kappa \saopt$$

So

$$\frac{\sab}{q} \leq \frac{ S- \kappa \saopt}{q}$$

and 

$$ \sa \leq \saopt \leq \saideal$$

This last inequality implies that the tasks allocated to the $p$-part of the platform are terminated before $\mideal$, and so necessarily before the tasks allocated to $q$-part. Therefore, we have $$M_A = \puisa{\frac{\sab}{q}}$$.

and so
$$
\left(\frac{M_A}{\mopt}\right)^{1/\alpha} 
\leq \frac{S-\kappa\saopt}{S-\saopt}\\ $$

Then, as $0<\kappa<1$ and $\saopt\leq\saideal$, we get

\begin{align*}
\left(\frac{M_A}{\mopt}\right)^{1/\alpha} 
&\leq \frac{S-\kappa\saideal}{S-\saideal}\\
&\leq \frac{1-\frac{\kappa p}{p+q}}{1-\frac{p}{p+q}}\\
&\leq \frac{p+q-\kappa p}{q}\\
&\leq 1 + \frac{p}{q} (1-\kappa)
\end{align*}

Then, $r$ is an upper bound of $\frac{p}{q}$, so

$$ \frac{M_A}{\mopt} \leq \puisa{1+ r\left(1-\kappa\right)}$$

Finally, by the definition of $\kappa$, we get

$$ \frac{M_A}{\mopt} \leq \lambda$$

\newparskip

We have supposed so far that the left inequality of (\ref{eq:apq}) holds. Otherwise, the second one holds. Note that both hypotheses only differ by an exchange of the roles of $p$ and $q$. Then, as the problem is strictly symmetric in $p$ and $q$, by an analogue reasoning, one can prove that \A launched on $X$, $\frac{qS}{p+q}$, $\ek$ returns a set $B$ in 

$$\Lambda' = \left\{ B\subset X \midset  (1-\ek) \sum_{i\in\bar \ao} x_i \leq \sum_{i\in B} x_i \leq \sum_{i\in\bar \ao} x_i\right\}$$

and that the schedule that associates $B$ to the $q$-part of the processors has a makespan smaller than $\lambda \mopt$. Indeed, we needed to obtain this conclusion that $r\geq \frac{p}{q}$, and as we also have $r\geq \frac qp$, the  same method works in this case.

\newparskip

To conclude, Algorithm \ref{alg:pqapp} launched with the parameter $\lambda$ computes a set $A\in \Lambda$ and a set $B\in \Lambda'$, then returns the schedule that has the minimum makespan between $\s_A$ and $\s_{\bar B}$. Therefore, regardless of which inequality of (\ref{eq:apq}) holds, the returned schedule has a makespan smaller than $\lambda\mopt$, and so Algorithm \ref{alg:pqapp} achieves a $\lambda$-approximation.\qed 
\end{proof}

Formally, the algorithm is the following.

\begin{figure}[!ht]

\SetEndCharOfAlgoLine{;}

\begin{algorithm}[H]

\SetKwFunction{pqapp}{PQApp}
\SetKwBlock{Fct}{Function \pqapp{$G$,$p$,$q$,$\lambda$}}{}
\SetKwInOut{KwOut}{Output}
\SetKwInOut{KwIn}{Input}

\SetKwIF{AFct}{AElseIf}{AElse}{if}{then}{else if}{else}{endif}
\SetVline
\Fct{
 \KwIn{A graph $G$ composed of $n$ independent tasks $T_i$ of length $L_i$, the parameters $p$ and $q$ of the processor platform $\mathcal{P}$, and the requested approximation ratio $\lambda$}
 \KwOut{A schedule $\s$ of $G$ on $\mathcal{P}$ that is a $\lambda$-approximation of the makespan}
% $\forall i,\ x_i \gets L_i^{1/\alpha}$\;
% $S \gets \sum_i x_i$\;
% $r \gets \max\left(\frac pq, \frac qp\right)$\;
 %$\gamma \gets \left(1-\frac{1}{r}(\lambda^{1/\alpha}-1)\right) \quad ; \quad
 %$k\gets \max\left(7, \left\lceil \frac{r}{1-\gamma} \right\rceil \right)-4$\;
%$k\gets \max\left(7, \left\lceil \frac{r}{\lambda^{1/\alpha}-1} \right\rceil \right)-4$\;
	\If{$\lambda > (1+r)^\alpha$}{
	\Return{the PM schedule on the largest part}}

 $A \gets \A\left(X,\frac{pS}{p+q}, \frac{\el}r\right) \quad ; \quad$ 
 $B \gets \A\left(X,\frac{qS}{p+q}, \frac{\el}r\right)$\;
% \uIf(\tcp*[f]{keep the schedule with the smallest makespan}){$\frac{\sum_{i\in\bar A}x_i}{q} > \frac{\sum_{i\in\bar B}x_i}{p}$}
%% \tcp*{$A_2$ represents the tasks to schedule on the $q$-part}
% {$A\gets \bar B$}
 \Return{the schedule with the minimum makespan between $\s_A$ and $\s_{\bar B}$}
 }
 \vspace*{-1em}
 \caption{Approximation scheme for the $(p,q)$-scheduling problem \mystrut(1.5em)}
 \label{alg:pqapp}
\end{algorithm}
\end{figure}

%The function $MT$ used takes 3 parameters. An integer $k$ corresponding to the approximation ratio required, a set of integers $\{x_i\}$, and a target integer. The result is a subset of the $\{x_i\}$ summing to a number not larger than the target integer, and with an error tolerance of $\left(1-\frac{1}{k+4}\right)$.

\fi %%%%%%%% fin version RR

\section{Conclusion}

In this paper, we have studied how to schedule trees of malleable
tasks whose speedup function on multicore platforms is $p^\alpha$. We
have first motivated the use of this model for sparse matrix
factorizations by actual experiments. When using factorization kernels
actually used in sparse solvers, we show that the speedup follows the
$p^\alpha$ model for reasonable allocations. On the machine used for
our tests, $\alpha$ is in the range 0.85--0.95. Then, we proposed a
new proof of the optimal allocation derived by Prasanna and
Musicus~\cite{prasmus,prasmus2} for such trees on single node
multicore platforms. Contrarily to the use of optimal control theory
of the original proofs, our method relies only on pure scheduling
arguments and gives more intuitions on the scheduling problem. Based
on these proofs, we proposed several extensions for two multicore
nodes: we prove the NP-completeness of the scheduling problem and propose
a \frtrd-approximation algorithm for a tree of malleable tasks on two
homogeneous nodes, and an FPTAS for independent malleable tasks on two
heterogeneous nodes. 
%Finally, we have estimated the potential gain of
%using an optimal allocation compared to simpler allocations from the
%literature on a single multicore node by extensive simulations.
%Although the improvement over simpler allocations may seem small in
%the measured range of $\alpha$ values, it has to be noted that (i)
%even a 5\% is interesting when comparing real software
%implementations, which is also why the PM allocation has already been
%considered for sparse solvers~\cite{guermouche}, (ii) the value of
%$\alpha$ is expected to be smaller for machine with weaker memory
%bandwidth and (iii) memory bandwidth increases at a smaller pace than
%core computing rates~\cite{graham2005getting}, which makes smaller
%values of $\alpha$ more relevant.

The perspectives to extend this work follow two main
directions. First, it would be interesting to extend the
approximations proposed for the heterogeneous case to a number of nodes
larger than two, and to more heterogeneous nodes, for which the value
of $\alpha$ differs from one node to another. This is a promising
model for the use of accelerators (such as GPU or Xeon Phi). The
second direction concerns an actual implementation of the PM
allocation scheme in a sparse solver. % Our last results show that the
% large implementation effort necessary for such an experimentation is
% worth it, especially when considering multicore platforms with many
% cores and limited memory bandwidth.

\bibliographystyle{splncs03}
\bibliography{europar}
\end{document}